\newtheorem{Corollary}{Corollary}
\newtheorem{Proposition}{Proposition}
\newtheorem{Lemma}{Lemma}
\newtheorem{Definition}{Definition}
\newtheorem{Theorem}{Theorem}
\newtheorem{Example}{Example}
\newtheorem{Assumption}{Assumption}
\newcommand{\EndExample}{{$\square$}}
\newcommand{\Real}{\mathbb{R}}
\newcommand{\ud}{\mathrm{d}}
\newcommand{\E}{\mathbb{E}}
\newcommand{\Q}{\mathbb{Q}}
\renewcommand{\P}{\mathbb{P}}
\newcommand{\ofrac}[1]{{\frac{1}{#1}}}
\newcommand{\ddfrac}[2]{{\frac{\ud #1}{\ud #2}}}
\newcommand{\tn}[1]{{^{(#1)}}}
\newcommand{\KLD}[2]{{\mathrm{D}({#1}\, \| \, {#2})}}
\newcommand{\Pin}[1]{{\P_{1,n}^{(#1)}}}
\newcommand{\Pon}[1]{{\P_{0,n}^{(#1)}}}
\newcommand{\Lambdahs}[2]{{\Lambda_{#1,#2}^*(\gamma, t^{(#2)})}}
\begin{document}

\title{Data Fusion Trees for Detection: Does Architecture Matter?
    \thanks{
    This research was supported, in part, by the
    National Science Foundation under contracts ECS-0426453 and
    ANI-0335256, the Charles Stark Draper Laboratory Robust Distributed
    Sensor Networks Program, and an Office of Naval Research Young
    Investigator Award N00014-03-1-0489.
    A preliminary version of this paper was presented at the 44th Annual Allerton Conference on Communication,
    Control, and Computing, Monticello, Illinois, September 2006.
    W.P.\ Tay, J.N.\ Tsitsiklis and M.Z.\ Win are with the
    Laboratory for Information and Decision Systems, MIT, Cambridge, MA, USA.
    E-mail: \texttt{\{wptay, jnt, moewin\}@mit.edu}
    }
}
\author{Wee~Peng~Tay,~\IEEEmembership{Student Member,~IEEE,}
        John~N.~Tsitsiklis,~\IEEEmembership{Fellow,~IEEE,}
        \and and~Moe~Z.~Win,~\IEEEmembership{Fellow,~IEEE}
}

\markboth
   {Submitted to IEEE Trans. Information Theory}
   {Tay \MakeLowercase{\textit{et al.}}: Data Fusion Trees for Detection: Does Architecture Matter?}

\maketitle \thispagestyle{empty}


\begin{abstract}
We consider the problem of decentralized detection in a network consisting of a large number of
nodes arranged as a tree of bounded height,
under the assumption of conditionally independent, identically distributed observations.
We characterize the optimal
error exponent under a Neyman-Pearson formulation.
We show that the Type II error probability decays exponentially fast with the number of nodes, and the optimal error exponent is often the same as that corresponding to a parallel configuration.
We provide sufficient, as well as necessary, conditions for this to happen.
For those networks satisfying the sufficient conditions,
we propose a simple strategy that nearly achieves the optimal error exponent,
and in which all non-leaf nodes need only send 1-bit messages.
\end{abstract}

\begin{keywords}
Decentralized detection, error exponent, sensor networks.
\end{keywords}

\section{Introduction}\label{sect:Introduction}

Most of the decentralized detection literature has been concerned with characterizing optimal detection strategies for particular sensor configurations; the comparison of
the detection performance of different configurations is a rather unexplored area. We bridge this gap by
considering the asymptotic performance of bounded height tree networks.
We analyze the dependence of the optimal error exponent on the
network architecture, and characterize the optimal error exponent for a large class
of tree networks.

The problem of optimal
decentralized detection has attracted a lot of interest over the last twenty-five years.
Tenney and Sandell~\cite{TenSan:81} are the first to consider a decentralized detection system in which each of several
sensors makes an observation and sends a summary
(e.g., using a quantizer or other ``transmission function'') to a fusion center. Such a system is to be contrasted to a {\em centralized} one, where the raw observations are transmitted directly to the fusion center. The framework  introduced in \cite{TenSan:81} involves a ``star topology'' or ``parallel configuration'': the fusion center is regarded as the root of a tree, while the sensors
are the leaves, directly connected to the root. Several pieces of work follow, e.g., \cite{ChaVar:86,PolTsi:90,WilWar:92,Tsi:93a,Tsi:93,IrvTsi:94,VisVar:97,CheVar:02,CheWil:05,Kas:06,LiuChe:06}, all of which study the parallel configuration under a Neyman-Pearson or Bayesian criterion. A common goal of these
references is to characterize the optimal transmission function, where optimality usually refers to the minimization of the probability of error or some other cost function at the fusion center. A typical result is that under the assumption of (conditionally) independent sensor observations, likelihood ratio quantizers are optimal; see \cite{Tsi:93} for a summary of such results.

The study of sensor networks other than the parallel configuration is initiated in~\cite{EkcTen:82}, which considers a tandem configuration, as well as more general tree configurations, and characterizes optimal
transmission strategies under a Bayesian formulation. Tree configurations are also discussed in \cite{VisThoTum:88,ReiNol:90a,TanPatKle:91,TanPatKle:93,PapAth:92a,PetPatKle:94,AlhVar:95,LinCheVar:05}, under various performance objectives.
In all but the simplest cases, the exact form of optimal strategies in tree configurations is difficult to derive. Most of these references focus on  person-by-person (PBP) optimality and obtain necessary, but not sufficient, conditions for an optimal strategy. When the transmission functions are assumed to be finite-alphabet quantizers, typical results establish that under
a conditional independence assumption,
likelihood ratio quantizers are PBP optimal. However, finding the optimal quantizer thresholds requires the solution of a nonlinear system of equations, with as many equations as there are thresholds. As a consequence, computing the optimal thresholds or characterizing the overall performance is hard, even for networks of moderate size.

Because of these difficulties, the analysis and comparison of large sensor networks
is apparently tractable only in an asymptotic regime that focuses
on the rate of decay of the error probabilities as the number of sensors increases.
For example, in the Neyman-Pearson framework, one can focus on minimizing the error exponent \footnote{Throughout this paper, $\log$ stands for the natural logarithm.}
\begin{align*}
g = \limsup_{n\to\infty} \ofrac{n} \log \beta_n,
\end{align*}
where $\beta_n$ is the Type II error probability at the fusion center and $n$ is the number of sensors,
while keeping the Type I error probability less than some given threshold.
Note our convention that error exponents are negative numbers.
The magnitude of the error exponent, $|g|$, is commonly referred to as the rate of decay of the Type II error probability.
A larger $|g|$ would translate to a faster decay rate, hence a better detection performance.
This problem has been studied in~\cite{Tsi:88}, for the case of a parallel configuration with a large number of sensors that receive
independent, identically distributed (i.i.d.) observations.

The asymptotic performance of another special configuration, involving $n$ sensors arranged in tandem, has been studied in
\cite{HelCov:70, Cov:69, PapAth:92}, under a Bayesian formulation.
Necessary and sufficient conditions for the error probability to decrease to zero as $n$ increases have been derived. However, even when the error probability decreases to zero, it apparently does so at a sub-exponential rate (see \cite{TayTsiWin:C07c} for such a result for the Bayesian case).
Accordingly, \cite{PapAth:92} argues that the tandem configuration is inefficient and suggests that as the number of sensors increases, the network ``should expand more in a parallel than in [a] tandem'' fashion.

Even though the error probabilities in a parallel configuration decrease exponentially, the energy consumption of having each sensor transmit directly to the fusion center can be too high. The energy consumption can be reduced
by setting up a directed spanning in-tree, rooted at the fusion center.  In a tree configuration, each non-leaf node combines its own observation (if any) with the messages it has received and forms a new message, which it transmits to another node. In this way, information from each node is propagated along a multi-hop path to the fusion center, but the information is ``degraded'' along the way.
For the case where observations are obtained only at the leaves, it is not hard to see that the detection performance of such a tree cannot be better than that of a parallel configuration with the same number of leaves.

In this paper, we investigate the detection performance of a tree configuration under a Neyman-Pearson criterion.
We restrict to trees with bounded height for two reasons. First, without a restriction on the height of the tree, performance can be poor (this is
exemplified by tandem networks in which, as remarked above, the error probability seems to decay at a sub-exponential rate). Second, bounded height translates to a bound on
the delay until information reaches the fusion center.

As it is not apparent that the Type II error probability decays exponentially fast with the
number of nodes in the network, we first show that under the bounded height assumption, exponential decay
is possible.
We then obtain the rather
counterintuitive result that
if leaves dominate (in the sense that asymptotically almost all nodes are leaves), then
bounded height trees have the same asymptotic performance as the parallel configuration,
even in non-trivial cases. (Such an equality is clear in some trivial cases, e.g., the configuration shown in
Figure \ref{fig:TrivialCase}, but is unexpected in general.) This result has important ramifications: a system designer can reduce the energy consumption in a network (e.g., by employing an $h$-hop spanning tree that minimizes the overall energy consumption), without losing detection efficiency, under certain conditions.

\begin{figure}[!htb]
\begin{center}
\includegraphics[scale=1]{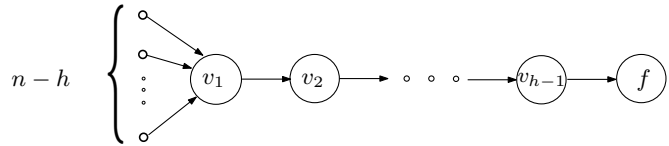}
\caption{A tree network of height $h$, with $n-h$ leaves. Its error probability is no larger than that of a parallel configuration with $n-h$ leaves  and a fusion center. If $h$ is bounded while $n$ increases, the optimal error exponent
is the same as for a parallel configuration with $n$ leaves.
}\label{fig:TrivialCase}
\end{center}
\end{figure}

We also provide a strategy in which each non-leaf node sends only a 1-bit message, and which nearly
achieves the same performance as the parallel configuration.
These results are counterintuitive for the following reasons: 1) messages are compressed to only one bit at each
non-leaf node so that
``information'' is lost along the way, whereas in the parallel configuration, no such compression occurs;
2) even though leaves dominate, there is no reason why the error exponent will be determined
solely by the leaves. For example, our discussion in Section \ref{subsect:DiscussionSufficient} indicates that without
the bounded height assumption, or
if a Bayesian framework is assumed instead of the Neyman-Pearson formulation, then a generic tree network (of height greater than 1)
performs strictly worse than a parallel configuration, even if leaves dominate.

Finally, under a mild additional assumption on the allowed transmission functions, we find that the sufficient conditions
for achieving the same error exponent as a parallel configuration, are also necessary.

The rest of this paper is organized as follows. In Section \ref{sect:ProblemFormulation}, we present our model in
detail. In Section \ref{sect:NeymanPearson}, we state the Neyman-Pearson problem, provide some motivating examples, and
state the main results.
In Section \ref{sect:SimpleRelay}, we consider  ``relay trees,'' in which observations are only made at the leaves.
In Section \ref{sect:OptimalError}, we prove the main results.
Finally, in Section \ref{sect:Conclusion}, we summarize and offer some concluding
remarks.

\section{Problem Formulation}\label{sect:ProblemFormulation}

In this section, we introduce the model 
and the required notation.
We consider a decentralized binary detection problem involving $n-1$ sensors and a fusion center; we will be interested in the case where $n$ increases to infinity.
We are given two probability spaces $(\Omega, \mathcal{F}, \P_0)$ and
$(\Omega, \mathcal{F}, \P_1)$, associated with two hypotheses $H_0$ and $H_1$.
We use $\E_j$ to denote the expectation operator with respect to $\P_j$.
Each sensor $v$ observes a random variable $X_v$ taking values in some set $\mathcal{X}$.
Under either hypothesis $H_j$, $j=0,1$, the random variables $X_v$ are i.i.d., with marginal distribution $\P_j^X$.

\subsection{Tree Networks} \label{subsect:TreeNetwork}

The configuration of the sensor network is represented by a directed tree
$T_n = (V_n, E_n)$. Here, $V_n$ is the set of nodes, of cardinality $n$,
and $E_n$ is the set of directed arcs of the tree.
One of the nodes (the ``root'') represents the fusion center, and the remaining $n-1$ nodes represent the remaining sensors.
We will always use the special symbol $f$ to denote the root of $T_n$.
We assume that the arcs
are oriented so that they all point towards the fusion center.
In the sequel, whenever we use the term ``tree'', we mean a directed, rooted tree as described above.

We will use the terminology ``sensor'' and ``node'' interchangeably. Moreover, the fusion
center $f$ will also be called a sensor, even though it plays the special role
of fusing; whether the fusion center makes its own observation or not is irrelevant, since we are working in the
large $n$ regime, and we will assume it does not.

We say that node $u$ is a \emph{predecessor} of
node $v$ if there exists a directed path from $u$ to $v$. In this case, we also say that $v$ is a
\emph{successor} of $u$. An \emph{immediate predecessor} of node $v$ is a node $u$ such that $(u,v) \in E_n$. An immediate successor is similarly defined.
Let the set of immediate predecessors of $v$ be $C_n(v)$.
If $v$ is a leaf, $C_n(v)$ is naturally defined to be empty.
The \emph{length} of a path is defined as the number of arcs in the path. The \emph{height} of the tree $T_n$ is the length of the longest path from a leaf to the root, and will be denoted by $h_n$.

Since we are interested in asymptotically large values of $n$, we will consider
a \emph{sequence} of trees $(T_n)_{n \geq 1}$. While we could think of the sequence as representing the evolution of the network as sensors are added, we do not require the sequence $E_n$ to be an increasing sequence of sets; thus, the addition of a new sensor to $T_n$ may result in some edges being deleted and some new edges being added.
We define the height of a sequence of trees to be $h = \sup_{n\geq1} h_n$.
We are interested in tree sequences of bounded height, i.e., $h < \infty$.

\begin{Definition}[$h$-uniform tree]\label{def:HeightUniform}
A tree $T_n$ is said to be $h$-uniform if the length of every path from a leaf to the root is
exactly $h$. A sequence of trees $(T_n)_{n\geq1}$ is said to be $h$-uniform if there exists some $n_0 <\infty$,
so that for all $n \geq n_0$, $T_n$ is $h$-uniform.
\end{Definition}

For a tree with height $h$, we say that a node is at \emph{level} $k$ if it is connected to the fusion center via a path of length $h-k$. Hence the fusion center $f$ is at level $h$, while in an $h$-uniform tree, all leaves are at level 0.

Let $l_n(v)$ be the number of leaves of the sub-tree rooted at the node $v$.
(These are the leaves whose path to $f$ goes through $v$.)
Thus, $l_n(f)$ is the total number of leaves.
Let $p_n(v)$ be the total number of predecessors of $v$, i.e., the total number of nodes in the sub-tree
rooted at $v$, not counting $v$ itself. Thus, $p_n(f)=n-1$.
We let $A_n \subset V_n$ be the set of nodes whose immediate predecessors include leaves of the
tree $T_n$. Finally, we let $B_n \subset A_n$ be the set of nodes all of whose predecessors are leaves; see Figure \ref{fig:AB}.

\begin{figure}[!htb]
\begin{center}
\includegraphics[scale=1]{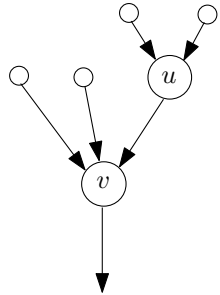}
\caption{Both nodes $v$ and $u$ belong to the set $A_n$, but only node $u$ belongs to the set $B_n$.}\label{fig:AB}
\end{center}
\end{figure}

\subsection{Strategies} \label{subsect:AdmissibleStrategy}

Given a tree $T_n$, consider a node $v\neq f$.
Node $v$ receives messages $Y_{u}$ from every  $u \in C_n(v)$ (i.e., from its immediate predecessors).
Node $v$ then uses a transmission function $\gamma_v$ to encode and transmit a summary
$Y_{v} = \gamma_v(X_v, \{Y_{u}: u \in C_n(v)\})$ of its own observation $X_v$, and of the received messages $\{Y_{u}: u \in C_n(v)\}$, to its immediate successor.\footnote{To simplify the notation, we suppress the dependence of $X_v$, $Y_v$, $\gamma_v$, etc.\ on $n$.}
We constrain all messages to be symbols in a fixed alphabet $\mathcal{T}$. Thus, if the in-degree of $v$ is $|C_n(v)|=d$, then the transmission function $\gamma_v$ maps $\mathcal{X} \times \mathcal{T}^d$ to $\mathcal{T}$.
Let $\Gamma(d)$ be a given set of transmission functions that the node $v$ can choose from.
In general, $\Gamma(d)$ is a subset of the set of all possible mappings from $\mathcal{X} \times \mathcal{T}^d$ to $\mathcal{T}$.
For example, $\Gamma(d)$ is often assumed to be the set of quantizers whose outputs are the result of comparing
likelihood ratios to some
thresholds (cf.\ the definition of a Log-Likelihood Ratio Quantizer in Section \ref{subsect:Assumptions}).
For convenience, we
denote the set of transmission functions for the leaves, $\Gamma(0)$, by $\Gamma$.
We assume that all transmissions are perfectly reliable.

Consider now the root $f$, and suppose that it has $d$ immediate predecessors. It receives messages from its immediate predecessors, and based on this information, it decides between the two hypotheses $H_0$ and $H_1$,
using a fusion rule $\gamma_f: \mathcal{T}^d \mapsto \{0,1\}$.\footnote{
Recall that in centralized Neyman-Pearson detection, randomization can reduce the Type II error probability.
Therefore, in general, the fusion center uses a randomized fusion rule to make its decision.
Similarly, the transmission functions $\gamma_v$ used by each node $v$, can also be
randomized. We avoid any discussion of randomization to simplify the exposition, and because randomization is not required
asymptotically, as will become apparent in Section \ref{sect:OptimalError}.}
Let $Y_{f}$ be a binary-valued random variable indicating the decision of the fusion center.

We define a {\em strategy} for a tree $T_n$, with $n-1$ nodes and a fusion center, as
a collection of transmission functions, one for each node, and a fusion rule.
In some cases, we will be considering strategies in which only the leaves make observations; every other node $v$ simply fuses the messages it has received, and forwards a message
$Y_{v} = \gamma_v( \{Y_{u}: u \in C_n(v)\})$ to its immediate successor.
A strategy of this type will be called a {\it relay strategy}.
A tree network in which we restrict to relay strategies will be called a {\it relay tree}.
If in addition, the alphabet $\mathcal{T}$ is binary, we will
use the terms \emph{1-bit relay strategy} and \emph{1-bit relay tree}.
Finally, in a relay tree, nodes other than the root and the leaves will be called \emph{relay nodes}.

\section{The Neyman-Pearson Problem}\label{sect:NeymanPearson}

In this section, we formulate the Neyman-Pearson decentralized detection problem in a tree network. We provide some motivating examples, and introduce our assumptions. Then, we give a summary of the main results.

Given a tree $T_n$,
we require that the
Type I error probability $\P_0(Y_{f}=1)$ be no more than a given $\alpha \in (0,1)$.
A strategy is said to be {\em admissible} if it meets this constraint.
We are interested in minimizing the
Type II error probability $\P_1(Y_{f}=0)$.
Accordingly, we define
$\beta^*(T_n)$ as the infimum of  $\P_1(Y_{f}=0)$, over all admissible strategies.
Similarly, we define $\beta_R^*(T_n)$ as the infimum of $\P_1(Y_{f}=0)$, over all admissible relay strategies.
Typically,
$\beta^*(T_n)$ or $\beta^*_R(T_n)$ will converge to zero as $n \to \infty$. We are interested in the question of whether such convergence takes place exponentially fast, and in the exact value of the Type II error exponent, defined by
$$
g^*=\limsup_{n\to\infty} \frac{1}{n}\log \beta^*(T_n),
\qquad g^*_R =\limsup_{n\to\infty} \frac{1}{l_n(f)}\log \beta^*_R(T_n).
$$
Note that in the relay case, we use the total number of leaves $l_n(f)$ instead of $n$ in the definition
of $g^*_R$. This is because only the leaves make observations and therefore, $g^*_R$ measures
the rate of error decay per observation.

We denote the Kullback-Leibler (KL) divergence of two probability measures, $\P$ and $\Q$, as
\begin{align*}
\KLD{\P}{\Q} = \E^{\P}\Big[\log \ddfrac{\P}{\Q}\Big],
\end{align*}
where $\E^\P$ is the expectation operator with respect to (w.r.t.) $\P$.
Suppose that $X$ is a sensor observation.
For any $\gamma\in\Gamma$, let the distribution of $\gamma(X)$ be $\P_j^\gamma$.
Note that $-\KLD{\P_0^\gamma}{\P_1^\gamma} \leq 0 \leq \KLD{\P_1^\gamma}{\P_0^\gamma}$, with both inequalities being strict as long as the measures $\P_0^\gamma$ and $\P_1^\gamma$ are not indistinguishable.

In the classical case of a parallel configuration, with $n-1$ leaves directly connected to the fusion center, the optimal error exponent, denoted as $g_P^*$, is given by \cite{Tsi:88}
\begin{align} \label{errorexpclassical}
g_P^*=\lim_{n \to \infty} \ofrac{n} \log \beta^*(T_n)
= - \sup_{\gamma \in \Gamma} \KLD{\P_0^\gamma}{\P_1^\gamma},
\end{align}
under Assumptions \ref{assumpt:EquivalentMeasures}-\ref{assumpt:BoundedDivergence}, stated in Section \ref{subsect:Assumptions} below.

Our objective is to study $g^*$ and $g^*_R$ for different sequences of trees. In particular, we wish to obtain bounds on these quantities, develop conditions under which they are strictly negative (indicating exponential decay of error probabilities), and develop conditions under which they are equal to $g^*_P$. At this point, under Assumptions \ref{assumpt:EquivalentMeasures}-\ref{assumpt:BoundedDivergence}, we can record two relations that are always true:
\begin{equation}
g^*_P \leq g^*_R, \qquad -\KLD{\P_0^X}{\P_1^X}\leq g^* \leq z g^*_R, \label{eq:gg}
\end{equation}
where $z = \liminf\limits_{n\to\infty} l_n(f)/n$.
The first inequality is true because all of the combining of messages that takes place in a relay network can be carried out internally, at the fusion center of a parallel network with the same number of leaves.
The inequality $-\KLD{\P_0^X}{\P_1^X} \leq g^*$ follows from the fact that
$-\KLD{\P_0^X}{\P_1^X}$ is the classical error exponent in a centralized system where all raw observations are transmitted directly to the the fusion center. Finally, the inequality $g^* \leq z g^*_R$ follows because an optimal strategy is at least as good as an optimal relay strategy; the factor of $z$
arises because we have normalized $g^*_R$ by $l_n(f)$ instead of $n$.

For a sequence of trees of the form shown in Figure~\ref{fig:TrivialCase}, it is easily seen that $g^*=g^*_R=g^*_P$.
In order to develop some insights into the problem, we now consider some less trivial examples.

\subsection{Motivating Examples}\label{subsect:Motiv}

In the following examples, we restrict to relay strategies for simplicity, i.e., we are interested in characterizing the error exponent $g_R^*$.
However, most of our subsequent results hold without such a restriction, and similar statements
can be made about the error exponent $g^*$ (cf.\ Theorem \ref{thm:MainResults}).

\begin{Example}\label{eg:TwoHeight}
Consider a 2-uniform sequence of trees, as shown in Figure \ref{fig:TwoHeight}, where each node $v_i$ receives messages
from $m=(n-3)/2$ leaves (for simplicity, we assume that $n$ is odd).
\begin{figure}[!htb]
\begin{center}
\includegraphics[scale=1]{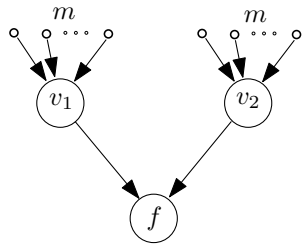}
\caption{A 2-uniform tree with two relay nodes.}\label{fig:TwoHeight}
\end{center}
\end{figure}

Let us restrict to 1-bit relay strategies.
Consider the fusion rule that declares $H_0$ iff both $v_1$ and $v_2$ send a $0$. In order to keep the  Type I error
probability bounded by $\alpha$, we view the message by each $v_i$ as a local decision about the hypothesis,
and require that its local Type I
error probability be bounded by $\alpha/2$. Furthermore, by viewing the sub-tree rooted at $v_i$ as a parallel configuration,
we can design strategies for each sub-tree
so that
\begin{align}
\lim_{n\to\infty} \ofrac{m}\log \P_1( Y_{v_i}=0) = g^*_P. \label{LocalErrExp}
\end{align}
At the fusion center, the Type II error exponent is then given by
\begin{align*}
\lim_{n\to\infty} \ofrac{n}\log \beta_n
& = \lim_{n\to\infty} \ofrac{n} \log \P_1( Y_{v_1}=0, Y_{v_2}=0) \\
& = \ofrac{2}\lim_{n\to\infty} \ofrac{m} \log \P_1( Y_{v_1}=0)
+ \ofrac{2}\lim_{n\to\infty} \ofrac{m} \log \P_1(Y_{v_2}=0) \\
& = g^*_P,
\end{align*}
where the last equality follows from (\ref{LocalErrExp}). This shows that the
Type II error probability falls exponentially and,
more surprisingly, that $g^*_R\leq g^*_P$. In view of Eq.\ (\ref{eq:gg}), we have $g^*_R=g^*_P$.
It is not difficult to generalize this conclusion to all sequences of trees in which the number $n-l_n(f)-1$ of relay nodes is bounded. For such sequences, we will also see that $g^*=g^*_R$ (cf.\ Theorem \ref{thm:MainResults}(\ref{MR:4})).\hfill \EndExample
\end{Example}

\begin{Example}\label{eg:InfiniteRelay}
We now consider an example in which the number of relay nodes grows with $n$. In Figure \ref{fig:InfiniteRelay}, we let
both $m$ and $N$ be increasing functions of $n$ (the total number of nodes), in a manner to be made explicit shortly.

\begin{figure}[!htb]
\begin{center}
\includegraphics[scale=1]{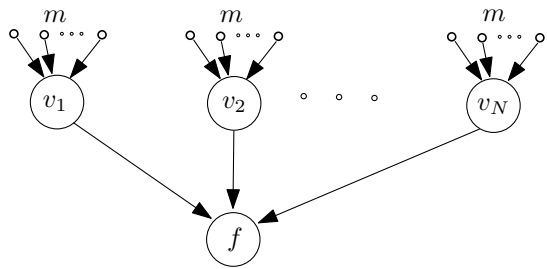}
\caption{A 2-uniform tree with a large number of relay nodes. }\label{fig:InfiniteRelay}
\end{center}
\end{figure}

Let us try
to apply a similar argument as in Example \ref{eg:TwoHeight}, to see whether the optimal exponent of the parallel configuration can be achieved with a relay strategy, i.e., whether $g^*_R=g^*_P$.
We let each node $v_i$ use a local Neyman-Pearson
test. We also let the fusion center declare $H_0$ iff it
receives a 0 from all relay sensors.
In order to have a hope of achieving the error exponent of the parallel configuration, we need to choose the local Neyman-Pearson test at each relay so that its
local Type II error exponent is close to $g^*_P=-\sup_{\gamma\in\Gamma} \KLD{\P_0^\gamma}{\P_1^\gamma}$.
However, the associated local Type I error cannot fall faster than exponentially, so we can assume it is bounded below by $\delta \exp(-m \epsilon)$, for some $\delta,\epsilon>0$, and for all $m$ large enough.
In that case,  the overall Type I error probability (at the fusion center) is at least $1 - (1 - \delta e^{-m\epsilon})^N$.
We then note that if $N$ increases quickly with $m$ (e.g., $N=m^m$), the Type I error probability approaches 1, and eventually exceeds $\alpha$. Hence, we no longer have an admissible strategy.
Thus, if there is a hope of achieving the optimal exponent $g^*_P$ of the parallel configuration, a more complicated fusion rule will have to be used. \hfill \EndExample
\end{Example}

Our subsequent results will establish that, similar to Example \ref{eg:TwoHeight}, the equalities $g^*=g^*_R=g^*_P$
also hold in Example \ref{eg:InfiniteRelay}. However, Example \ref{eg:InfiniteRelay} shows that in order to achieve this optimal error exponent, we may need to employ nontrivial fusion rules at the fusion center (and for similar reasons at the relay nodes), and various thresholds will have to be properly tuned. The simplicity of the fusion rule in Example \ref{eg:TwoHeight} is not representative.

In our next example, the optimal error exponent is inferior (strictly larger) than that of a parallel configuration.

\begin{Example}\label{eg:SubParallel}
Consider a sequence of 1-bit relay trees with the structure shown
in Figure \ref{fig:SubParallel}.
\begin{figure}[!htb]
\begin{center}
\includegraphics[scale=1]{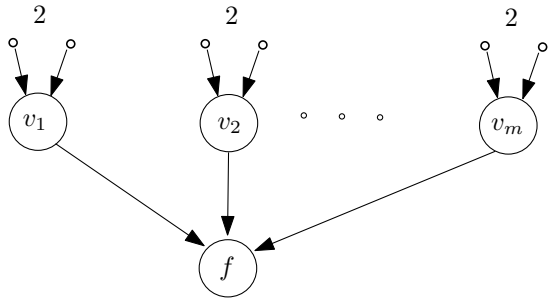}
\caption{A 2-uniform tree, with $m=l_n(f)/2$.}\label{fig:SubParallel}
\end{center}
\end{figure}
Let the observations $X_v$ at the leaves be  i.i.d.\ Bernoulli random variables with parameter $1-p$ under $H_0$, and parameter $p$ under $H_1$, where $1/2 < p < 1$.
Note that
$$g^*_P= \E_0\Big[\log \ddfrac{\P_1^X}{\P_0^X} \Big] =
p \log \frac{1-p}{p} + (1-p) \log \frac{p}{1-p}.
$$

We can identify this relay tree with a parallel configuration involving $m$ nodes,
with each node receiving an independent observation distributed as $\gamma(X_1,X_2)$.
Note that we can restrict
the transmission function $\gamma$ to be the same for all nodes $v_1, ..., v_m$
\cite{Tsi:88}, without loss of optimality.
We have
\begin{align}
\lim_{n\to\infty} \ofrac{m} \log \beta^*(T_n)
 = \min_{\gamma \in \Gamma(2)}
 \sum_{j=0}^1 \P_0\big(\gamma(X_1,X_2)=j\big)
 \log\Big[\frac{\P_1\big(\gamma(X_1,X_2)=j\big)}
 {\P_0\big(\gamma(X_1,X_2)=j\big)}\Big].\label{SubParallelErrExp}
\end{align}
To minimize the right-hand side (R.H.S.) of \eqref{SubParallelErrExp},
we only need to consider a small number of choices for $\gamma$.
If $\gamma(X_1,X_2)=X_1$, we are effectively removing half of the original $2m$ nodes, and the resulting error exponent is $g^*_P/2$, which is inferior to $g^*_P$. Suppose now that $\gamma$ is of the form
$\gamma(X_1,X_2)=0$  iff $X_1=X_2=0$.
Then, it is easy to see, after some calculations (omitted), that
\begin{align*}
\lim_{n\to\infty} \ofrac{m} \log \beta^*(T_n)
& = p^2 \log \frac{(1-p)^2}{p^2} + (1-p^2)\log \frac{1-(1-p)^2}{1-p^2} \\
& > 2 \Big( p \log \frac{1-p}{p} + (1-p) \log \frac{p}{1-p} \Big),
\end{align*}
and
$$
\lim_{n\to\infty} \ofrac{l_n(f)} \log \beta^*(T_n) > p \log \frac{1-p}{p} + (1-p) \log \frac{p}{1-p} = g^*_P.$$

Finally, we need to consider $\gamma$ of the form $\gamma(X_1,X_2)=1$ iff $X_1=X_2=1$. A similar calculation (omitted) shows that the resulting error exponent is again inferior.
We conclude that the relay network is strictly inferior to the parallel configuration,
i.e., $g^*_P < g^*_R$.
An explanation is provided by noting that
this sequence of trees violates a necessary condition,
developed in Section \ref{subsect:NecessaryCondition} for the optimal error exponent to be the same as that of a parallel configuration; see Theorem \ref{thm:MainResults}(\ref{MR:5}). \hfill \EndExample
\end{Example}

A comparison of the results for the previous examples suggests that we have $g^*_P=g^*_R$ (respectively, $g^*_P<g^*_R$) whenever the degree of level 1 nodes increases (respectively, stays bounded) as $n$ increases. That would still leave open the case of networks in which different level 1 nodes have different degrees, as in our next example.

\begin{Example}\label{eg:IncreasingLeaves}
Consider a sequence of $2$-uniform trees of the form shown in Figure \ref{fig:IncreasingLeaves}.
Each node $v_i$, $i=1,...,m$, has $i+1$ leaves attached to it.
We will see that the optimal error
exponent is again the same as for a parallel configuration, i.e., $g^*_R=g^*=g^*_P$.
(cf.\ Theorem \ref{thm:MainResults}(\ref{MR:2})).
\begin{figure}[!htb]
\begin{center}
\includegraphics[scale=0.8]{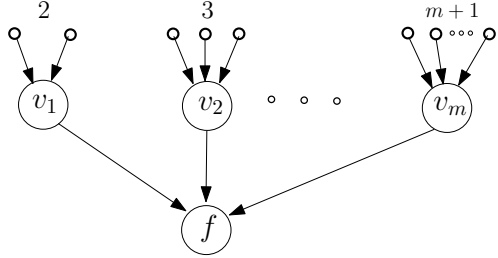}
\caption{A 2-uniform tree, with $l_n(v_i)=i+1$.}\label{fig:IncreasingLeaves}
\end{center}
\end{figure}
\hfill \EndExample
\end{Example}

\subsection{Assumptions}\label{subsect:Assumptions}

In this subsection, we list our assumptions.
Assumptions \ref{assumpt:EquivalentMeasures} and \ref{assumpt:BoundedDivergence} are similar to the assumptions made in the study of the parallel configuration (see \cite{Tsi:88}).

\begin{Assumption}\label{assumpt:EquivalentMeasures}
The measures $\P_0^X$ and $\P_1^X$ are equivalent, i.e., they are absolutely continuous w.r.t.\ each other. Furthermore, there exists some $\gamma\in\Gamma$ such that
$ -\KLD{\P_0^\gamma}{\P_1^\gamma} < 0 < \KLD{\P_1^\gamma}{\P_0^\gamma}$.
\end{Assumption}

\begin{Assumption}\label{assumpt:BoundedDivergence}
$ \E_0 \big[\log^2 \ddfrac{\P_1^X}{\P_0^X}\big] < \infty$.
\end{Assumption}
Assumption \ref{assumpt:BoundedDivergence} implies the following lemma; see \cite{Tsi:88} for a proof.

\begin{Lemma}\label{lemma:BoundedDivergence}
There exists some $a \in (0,\infty)$, such that for all $\gamma\in \Gamma$,
\begin{align*}
& \E_0 \Big[\log^2 \ddfrac{\P_1^\gamma}{\P_0^\gamma}\Big] \leq  \E_0 \Big[\log^2 \ddfrac{\P_1^X}{\P_0^X}\Big] +1 < a,\\
& \E_0 \Big[ \Big| \log \ddfrac{\P_1^\gamma}{\P_0^\gamma} \Big| \Big] < a.
\end{align*}

\end{Lemma}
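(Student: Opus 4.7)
The starting point is the Radon--Nikodym identity
\[
\ddfrac{\P_1^\gamma}{\P_0^\gamma}(\gamma(X)) = \E_0\Big[\ddfrac{\P_1^X}{\P_0^X}(X) \,\Big|\, \gamma(X)\Big] \quad \P_0\text{-a.s.},
\]
which follows from the tower property applied to $\P_1^\gamma(A) = \E_0[\indicator_{\{\gamma(X)\in A\}} \ddfrac{\P_1^X}{\P_0^X}(X)]$. Writing $R = \ddfrac{\P_1^X}{\P_0^X}$ and $R_\gamma = \E_0[R \mid \gamma(X)]$, we note the normalization $\E_0[R_\gamma] = \E_0[R] = 1$.

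For the first inequality I would decompose $\log^2 R_\gamma = ((\log R_\gamma)^+)^2 + ((\log R_\gamma)^-)^2$ and treat the two parts separately. For the negative part, Jensen's inequality applied to the concave function $\log$ gives $\log R_\gamma = \log \E_0[R \mid \gamma(X)] \geq \E_0[\log R \mid \gamma(X)]$, whence $(\log R_\gamma)^- \leq \E_0[(\log R)^- \mid \gamma(X)]$. Squaring (both sides nonnegative) and applying conditional Jensen to the convex map $x \mapsto x^2$ yields $\E_0[((\log R_\gamma)^-)^2] \leq \E_0[((\log R)^-)^2] \leq \E_0[\log^2 R]$. For the positive part, I would invoke the elementary inequality $(\log x)^2 \leq x - 1$ for all $x \geq 1$ (both sides vanish at $x=1$, and one checks that the derivative of the difference, $1 - 2(\log x)/x$, is bounded below by $1 - 2/e > 0$ on $[1,\infty)$). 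This gives $((\log R_\gamma)^+)^2 \leq (R_\gamma - 1)^+$ and hence $\E_0[((\log R_\gamma)^+)^2] \leq \E_0[R_\gamma] = 1$. Summing yields $\E_0[\log^2 R_\gamma] \leq \E_0[\log^2 R] + 1$, with the right-hand side uniformly finite by Assumption \ref{assumpt:BoundedDivergence}.

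For the second inequality I would split $|\log R_\gamma| = (\log R_\gamma)^+ + (\log R_\gamma)^-$. The bound $\log x \leq x - 1$ gives $\E_0[(\log R_\gamma)^+] \leq \E_0[(R_\gamma - 1)^+] \leq \E_0[R_\gamma] = 1$. For the negative part, use $\E_0[(\log R_\gamma)^-] = \E_0[(\log R_\gamma)^+] - \E_0[\log R_\gamma] = \E_0[(\log R_\gamma)^+] + \KLD{\P_0^\gamma}{\P_1^\gamma}$, together with the data-processing inequality $\KLD{\P_0^\gamma}{\P_1^\gamma} \leq \KLD{\P_0^X}{\P_1^X} \leq \sqrt{\E_0[\log^2 R]}$ (the last step by Cauchy--Schwarz). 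Summing gives $\E_0[|\log R_\gamma|] \leq 2 + \sqrt{\E_0[\log^2 R]}$, again finite and uniform in $\gamma$; one then takes $a$ to be any constant exceeding both uniform bounds.

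The main obstacle is that $\log^2$ is not globally convex on $(0, \infty)$ (it is concave for $x > e$), so a naive Jensen bound of the form $(\log \E_0[R \mid \gamma(X)])^2 \leq \E_0[\log^2 R \mid \gamma(X)]$ is false in general. The positive/negative decomposition is what rescues the argument: concavity of $\log$ dominates the negative tail directly, while on the positive tail the calculus inequality $(\log x)^2 \leq x - 1$ converts the bound into a first-moment estimate that is controlled by the mass-balance $\E_0[R_\gamma] = 1$; this is also the structural source of the $+1$ slack appearing in the lemma statement.
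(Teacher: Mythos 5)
Your proof is correct. Note that the paper does not actually prove this lemma --- it defers to \cite{Tsi:88} --- so there is no in-paper argument to compare against; your derivation is a valid self-contained proof and follows the same standard route as the cited reference: the identity $\ddfrac{\P_1^\gamma}{\P_0^\gamma}(\gamma(X)) = \E_0\big[\ddfrac{\P_1^X}{\P_0^X}\,\big|\,\gamma(X)\big]$, conditional Jensen on the negative part, and the calculus bound $(\log x)^2 \le x-1$ on $[1,\infty)$ to reduce the positive part to the first-moment estimate $\E_0[(R_\gamma-1)^+]\le \E_0[R_\gamma]=1$, which is indeed the source of the $+1$ slack in the statement. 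The individual steps all check out (in particular $1-2(\log x)/x \ge 1-2/e>0$ on $[1,\infty)$, so $x-1-(\log x)^2$ increases from $0$ there, and the monotonicity-plus-convexity argument for $(\log R_\gamma)^- \le \E_0[(\log R)^-\mid \gamma(X)]$ is sound). One minor simplification: the second inequality follows from the first by Cauchy--Schwarz, $\E_0[|\log R_\gamma|]\le (\E_0[\log^2 R_\gamma])^{1/2}\le (\E_0[\log^2 R]+1)^{1/2}$, so the data-processing step is not needed.
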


Given an admissible strategy, and for each node $v \in V_n$,
we consider the log-likelihood ratio of the distribution of $Y_{v}$ (the message sent by $v$) under $H_1$,
w.r.t.\ its distribution under $H_0$,
\begin{align*}
\mathcal{L}_{v,n}(y) = \log \ddfrac{\Pin{v}}{\Pon{v}}(y),
\end{align*}
where $\ud \Pin{v}/ \ud \Pon{v}$ is the Radon-Nikodym derivative of the distribution of $Y_{v}$
under $H_1$ w.r.t.\ that under $H_0$.
If $Y_{v}$ takes values in a discrete set, then
this is just the log-likelihood ratio
$ \log \big(\P_1(Y_{v} = y)/\P_0(Y_{v} = y)\big)$.
For simplicity, we let $L_{v,n} = \mathcal{L}_{v,n}(Y_{v})$ and define
the log-likelihood ratio of the received messages at node $v$ to be
\begin{align*}
S_n(v) = \sum_{u \in C_n(v)} L_{u,n}.
\end{align*}
(Recall that $C_n(v)$ is the set of immediate predecessors of $v$.)

A (1-bit) Log-Likelihood Ratio Quantizer (LLRQ) with threshold $t$ for a non-leaf node $v$,
with $|C_n(v)|=d$,
is a binary-valued function on $\mathcal{T}^d$, defined by
\begin{align*}
{\rm LLRQ}_{d,t} \big(\{y_u : u\in C_n(v)\}\big)
= \left \{ \begin{array}{ll}
0, & {\rm if}\ x \leq t, \\
1, & {\rm if}\ x > t,
\end{array} \right.
\end{align*}
where
\begin{equation}\label{eq:x}
x = \ofrac{l_n(v)} \sum_{u \in C_n(v)} \mathcal{L}_{u,n}(y_u).
\end{equation}
By definition, a node
$v$ that uses a LLRQ ignores its own observation $X_v$ and acts as a relay.
If all non-leaf nodes use a LLRQ, we have a special case of a relay strategy.
We will assume that LLRQs are available choices of transmission functions for all non-leaf nodes.
\begin{Assumption}\label{assumpt:LLRQ}
For all $t\in \Real$ and $ d > 0$, $\mathrm{LLRQ}_{d,t} \in \Gamma(d)$.
\end{Assumption}

As already discussed (cf.\ Eq.\ (\ref{eq:gg})), the optimal performance of a relay tree is always dominated by
that of a parallel configuration with the same number of leaves, i.e., $g^*_P \leq  g^*_R$.
In Section \ref{sect:OptimalError}, we find sufficient conditions under which the equality  $g^*_R = g^*_P$ holds. Then, in
Section \ref{subsect:NecessaryCondition}, we look into necessary conditions for  this to be the case.
It turns out that
non-trivial necessary conditions for the equality $g^*_R = g^*_P$ to hold are, in general, difficult to obtain, because they depend on the nature of the transmission functions available to the sensors. For example, if the sensors are allowed to simply forward undistorted all of the messages that they receive,  then the equality $g^*_R = g^*_P$ holds trivially.  Hence, we need to impose some restrictions on the set of transmission functions available, as in the assumption that follows.

\begin{Assumption}\label{assumpt:LossOfDivergence}\
\begin{enumerate}[(a)]
\item There exists a $n_0 \geq 1$ such that for all $n\geq n_0$, we have $l_n(v) > 1$ for all $v$ in the set $B_n$ of
nodes whose immediate predecessors are all leaves.
\item
Let $X_1,X_2,\ldots$ be i.i.d.\ random variables under either hypothesis $H_j$, each with distribution $\P_j^{X}$.
For $k > 1$,
$\gamma_0 \in \Gamma(k)$, and $\gamma_i \in \Gamma$, $i=1,\ldots,k$, let $\xi =(\gamma_0,\ldots,\gamma_k)$.
We also let $\nu_j^\xi$ be the distribution of $\gamma_0(\gamma_1(X_1),\ldots,\gamma_k(X_k))$ under hypothesis $H_j$.
We assume that
\begin{align}\label{LossOfDivergence}
g^*_P <
\inf_{\xi\in \Gamma(k)\times\Gamma^k} \ofrac{k} \E_0\Big[\log \ddfrac{\nu_1^\xi}{\nu_0^\xi}\Big],
\end{align}
for all $k > 1$.
\end{enumerate}
\end{Assumption}

Assumption \ref{assumpt:LossOfDivergence} holds in most cases of interest.
Part (a) results in no loss of generality: if in a relay tree we have $l_n(v)=1$ for some $v\in B_n$, we can remove the predecessor of $v$, and treat $v$ as a leaf. Regarding part (b),
it is easy to see that the left-hand side (L.H.S.) of (\ref{LossOfDivergence})
is always less than or equal to the R.H.S.,
hence we have only excluded those cases where (\ref{LossOfDivergence}) holds with equality. We are essentially assuming that when
the
messages $\gamma_1(X_1),\ldots,\gamma_k(X_k)$ are summarized (or quantized) by $\gamma_0$, there is some loss of information, as measured by the associated KL divergences.

\subsection{Main Results}\label{subsect:MainResults}

In this section, we collect and summarize our main results. The asymptotic proportion of nodes that are leaves, defined by
$$z=\liminf\limits_{n\to\infty} \frac{l_n(f)}{n},$$
plays a critical role.

\begin{Theorem}\label{thm:MainResults}
Consider a sequence of trees, $(T_n)_{n\geq1}$, of bounded height.
Suppose that Assumptions \ref{assumpt:EquivalentMeasures}-\ref{assumpt:LLRQ} hold. Then,
\begin{enumerate}[(i)]
\item\label{MR:1}
$g^*_P \leq g^*_R < 0$ and $-\KLD{\P_0^X}{\P_1^X} \leq g^* \leq z g^*_R < 0$.
\item \label{MR:2}
If $z=1$, then $g^*_P = g^* = g^*_R$.
\item \label{MR:4}
If the number of non-leaf nodes is bounded, or if
$\min_{v\in B_n}l_n(v)\to\infty$,
then
$g^*_P=g^*=g^*_R$.
\item \label{MR:5}
If Assumption \ref{assumpt:LossOfDivergence} also holds, we have $g^*_R=g^*_P$ iff $z=1$.
\end{enumerate}
\end{Theorem}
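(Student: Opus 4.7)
The plan is to separate the four parts into achievability results, (i)--(iii), which construct explicit strategies, and a converse, (iv), which exploits Assumption~\ref{assumpt:LossOfDivergence}. Part (i)'s lower bounds on $g^*$ follow from the centralized data-processing bound and from Eq.~(\ref{eq:gg}), so only the strict negativity $g_R^*<0$ requires an achievability argument.

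For the achievability in (i), (ii), and (iii), I would build 1-bit LLRQ-based relay strategies layer by layer from the leaves up to the root. Each leaf applies a near-optimal transmission function $\gamma^*$ attaining $\KLD{\P_0^{\gamma^*}}{\P_1^{\gamma^*}}$ close to $|g_P^*|$, and every non-leaf node uses an LLRQ (available by Assumption~\ref{assumpt:LLRQ}) with a threshold tuned so that both the local Type~I and Type~II errors decay exponentially at controlled rates; a thresholded fusion rule then propagates these guarantees upward. An induction on the $h$ layers, combined with Cram\'er-type large deviation estimates, yields $g_R^*<0$ in (i). For (ii), since $z=1$ means non-leaves comprise $o(n)$ of the nodes, the slack in the threshold choices can be absorbed in $o(n)$ on the log-scale and the per-leaf exponent matches $g_P^*$. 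Part (iii) is similar: Example~\ref{eg:TwoHeight} already handles a bounded number of non-leaves, and when $\min_{v\in B_n}l_n(v)\to\infty$ each subtree rooted at a $B_n$-node behaves asymptotically like an independent parallel configuration of many leaves, each attaining an exponent arbitrarily close to $|g_P^*|$ per leaf.

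For (iv), the ``if'' direction is (ii); the ``only if'' direction is proved by contrapositive. Assume $z<1$, so along a subsequence $l_n(f)/n\leq 1-\eta$ for some $\eta>0$. Fix an admissible relay strategy and set $D_v=\KLD{\Pon{v}}{\Pin{v}}$. Since sibling subtrees are disjoint, the children's messages at every node are conditionally independent under each hypothesis, so KL divergences combine additively by the data processing inequality. For every $v\in B_n$ (which has $l_n(v)\geq 2$ by Assumption~\ref{assumpt:LossOfDivergence}(a)), Assumption~\ref{assumpt:LossOfDivergence}(b) yields some $\delta_{l_n(v)}>0$ with $D_v \leq l_n(v)(|g_P^*|-\delta_{l_n(v)})$; unrolling this recursion up the tree gives
\begin{align*}
\sum_{v\in C_n(f)} D_v \;\leq\; l_n(f)\,|g_P^*|\;-\;\sum_{w\in B_n} l_n(w)\,\delta_{l_n(w)}.
\end{align*}
A Chernoff--Stein-type bound for the (independent) messages arriving at the root yields $-g_R^*\leq\limsup_n \bigl(\sum_v D_v\bigr)/l_n(f)$, so it suffices to show the loss term is $\Omega(l_n(f))$ along the subsequence. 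Fix $K$ and set $\delta_K^*=\min_{2\leq k\leq K}\delta_k>0$; the loss is at least $\delta_K^*$ times the number $L_B^{\leq K}$ of leaves whose $B_n$-parent has at most $K$ children. If $L_B^{\leq K}=o(l_n(f))$ for every $K$, then $|B_n|\leq L_B^{\leq K}/2+l_n(f)/K$ for each $K$ (each $B_n$-node with at most $K$ children contributes at least $2$ to $L_B^{\leq K}$, and there are at most $l_n(f)/K$ nodes in $B_n$ with more than $K$ children), hence $|B_n|=o(l_n(f))$. Bounded height forces $|I|\leq (h-1)|B_n|$, where $I$ denotes the set of internal non-root nodes, because every such node has a $B_n$-descendant and each $B_n$-node has at most $h-1$ internal non-root ancestors (including itself); so $|I|=o(l_n(f))$ and $l_n(f)/n\to 1$, contradicting $z<1$.

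The main obstacle is the combinatorial step at the end of (iv), especially its extension to bounded-height trees that are not $h$-uniform, where $B_n$-nodes and leaves may be scattered across multiple levels. A secondary technical subtlety is converting the KL-divergence upper bound on $\sum_v D_v$ into a tight exponent bound on $g_R^*$ without picking up a $1/(1-\alpha)$ factor, which requires a Chernoff-type (rather than purely divergence-based) converse exploiting the independence of the children's messages.
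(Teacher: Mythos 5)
Your overall architecture --- LLRQ-based achievability for (i)--(iii) and a divergence/change-of-measure converse for (iv) --- matches the paper's, and your converse for (iv) is essentially the paper's Stein-type argument rewritten in KL language (note $\E_0[L_{v,n}]=-\KLD{\Pon{v}}{\Pin{v}}$, so your $\sum_v D_v$ is the paper's $-l_n(f)\lambda_n$, and your final combinatorial step is precisely the paper's Lemma~\ref{lemma:Equivalence}). There are, however, three concrete gaps. First, for the strict negativity $g^*_R<0$ in (i), tuning each LLRQ ``so that both the local Type I and Type II errors decay exponentially'' breaks down exactly in the regime not covered by (ii) and (iii): when essentially all nodes of $B_n$ have a bounded number of leaves ($q_N=1$ for the relevant $N$), a level-1 node sees only boundedly many observations, its local error probabilities are bounded away from zero, and the penalty term in the Chernoff recursion (the $h/N$ of Proposition~\ref{prop:UpperBddErr}) swamps the rate. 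The paper handles this case separately (Proposition~\ref{prop:OtherTypes}) by deleting all but one leaf per $B_n$-node, collapsing the tree to an $(h-1)$-uniform one that still has $\Omega(l_n(f))$ leaves, and inducting on the height; some such reduction is needed and is not a routine instance of your layer-by-layer induction.

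Second, part (ii) asserts $g^*=g^*_P$, not just $g^*_R=g^*_P$. Your proposal only bounds $g^*$ from above (via $g^*\leq z g^*_R$); the lower bound $g^*\geq g^*_P$ does not follow from Eq.~(\ref{eq:gg}), whose bound $-\KLD{\P_0^X}{\P_1^X}$ lies below $g^*_P$ in general. Since in a non-relay tree the $n-l_n(f)-1$ non-leaf nodes make their own observations, one must argue that even letting them report raw observations to a simulating fusion center changes the exponent by only $o(n)\cdot\KLD{\P_0^X}{\P_1^X}$, which is where $z=1$ enters; this direction is absent from your plan. Third, in (iv) the ``secondary technical subtlety'' you flag is actually the crux: Chebyshev concentration of $S_n(f)/l_n(f)$ needs the messages reaching the root to have uniformly bounded second-moment log-likelihood ratios, which can fail if some child of the root aggregates a constant fraction of the leaves. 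The paper resolves this, together with your ``main obstacle'' about non-uniform trees, in one stroke by first simulating $T_n$ with a height-2 tree whose only intermediaries are the nodes of $F_{N,n}$ (so every intermediary satisfies $l_n(v)\leq N$ and Lemma~\ref{lemma:SnConvergence} applies); without that reduction your converse does not close.
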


Note that part (\ref{MR:1}) follows from (\ref{eq:gg}), except for the strict negativity of the error exponents, which is established in Proposition \ref{prop:OtherTypes}.
Part (\ref{MR:2}) is proved in Proposition \ref{prop:NeymanPearsonSuff}.
Part (\ref{MR:4}) is proved in Corollary \ref{cor:Sufficient}.
(Recall that $B_n$ is the set of non-leaf nodes all of whose immediate predecessors
are leaves.)
Part (\ref{MR:5}) is proved in
Proposition \ref{prop:NecessaryAndSufficient}.
One might also have expected a result asserting that $g^*_P\leq g^*$. However, this is not true without additional assumptions, as will be discussed in Section \ref{subsect:NecessaryCondition}.

\section{Error Bounds for $h$-Uniform Relay Trees}\label{sect:SimpleRelay}

In this section, we consider a 1-bit $h$-uniform relay tree, in which all relay nodes at level $k$
use a LLRQ with a common threshold $t_k$.
We wish to develop upper bounds for the error probabilities at the various nodes.
We do this recursively, by moving along the levels of the tree, starting from the leaves.
Given bounds on the error probabilities associated with the messages received by a node, we develop a bound on the log-moment generating function at that node (cf.\ Eq.\ (\ref{eq:mgf})), and then use the standard Chernoff bound technique to develop a bound on the error probability for the message sent by that node (cf.\ Eq.\ (\ref{eq:lstar})).

Let $t\tn{k} = (t_1, t_2, \ldots, t_k)$, for $k\geq1$, and $t\tn{0}=\emptyset$. For $j=0,1$, $k\geq1$, and $\lambda\in\Real$, we
define recursively
\begin{align}
& \Lambda_{j,0}(\gamma; \lambda) =\Lambda_{j,0}(\gamma,\emptyset; \lambda)
= \log \E_j \Big[ \Big(\ddfrac{\P_{1}^\gamma}{\P_{0}^\gamma}\Big)^\lambda  \Big],\nonumber\\
& \Lambdahs{j}{k} = \sup_{\lambda \in \Real} \big \{ \lambda t_k - \Lambda_{j,k-1}(\gamma,t\tn{k-1}; \lambda) \big \}, \label{eq:lstar}\\
& \Lambda_{j,k}(\gamma,t\tn{k}; \lambda) = \max \big \{ - \Lambdahs{1}{k} (j + \lambda), \Lambdahs{0}{k}(j-1+\lambda) \big \}. \label{eq:mgf}
\end{align}
The operation in (\ref{eq:lstar}) is known as the
Fenchel-Legendre transform of $\Lambda_{j,k-1}(\gamma,t\tn{k-1}; \lambda)$ \cite{DemZei:98}.
We will be interested in the case where
\begin{align}
& -\KLD{\P_0^\gamma}{\P_1^\gamma}< 0 < \KLD{\P_1^\gamma}{\P_0^\gamma}, \label{barx}\\
& t_1 \in \big(-\KLD{\P_0^\gamma}{\P_1^\gamma}, \KLD{\P_1^\gamma}{\P_0^\gamma}\big), \label{t1interval}\\
& t_k \in \big(- \Lambdahs{1}{k-1}, \Lambdahs{0}{k-1}\big), \textrm{ for $1 < k \leq h$}. \label{tkinterval}
\end{align}

We now provide an inductive argument to show that the above requirements on the thresholds $t_k$ are feasible.
From Assumption \ref{assumpt:EquivalentMeasures}, there exists a $\gamma\in \Gamma$ that satisfies
(\ref{barx}), hence the constraint (\ref{t1interval}) is feasible.
Furthermore, the $\Lambdahs{j}{1}$ are large deviations rate functions and are therefore positive when $t_1$ satisfies (\ref{t1interval}) \cite{DemZei:98}.
Suppose now that $k>1$ and that $\Lambdahs{j}{k-1} > 0$.
From (\ref{eq:mgf}), $\Lambda_{j,k-1}(\gamma,t\tn{k-1};\lambda)$ is the maximum of two linear functions
of $\lambda$ (see Figure \ref{fig:MaxLinear}). Taking the Fenchel-Legendre transform, and since
$t_k$ satisfies (\ref{tkinterval}), we obtain $\Lambda_{j,k}^*(\gamma,t\tn{k}) >0$, which completes the induction.

\begin{figure}[!htb]
\begin{center}
\includegraphics[scale=1]{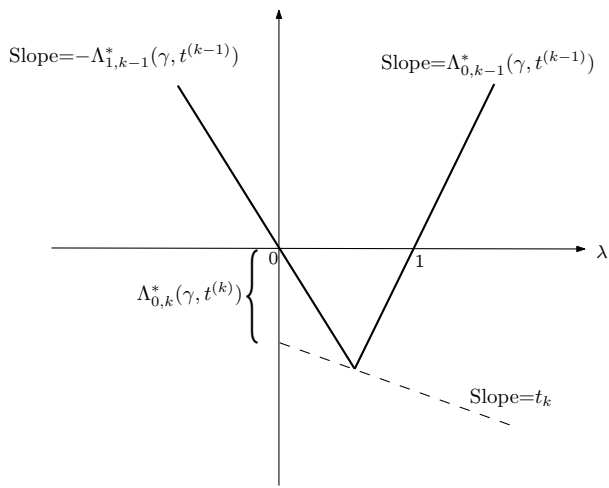}
\caption{Typical plot of $\Lambda_{0,k-1}(\gamma,t\tn{k-1};\lambda)$, $k\geq2$.}\label{fig:MaxLinear}
\end{center}
\end{figure}

From the definitions of $\Lambda_{j,k}$ and $\Lambda_{j,k}^*$, the following relations can be established. The proof consists of straightforward algebraic manipulations and is omitted.

\begin{Lemma}\label{lemma:LambdaRelationship}
Suppose that $\gamma\in\Gamma$ satisfies (\ref{barx}), and $t\tn{h}$ satisfies (\ref{t1interval})-(\ref{tkinterval}). For $k\geq1$, we have
\begin{align*}
\Lambdahs{1}{k} & = \Lambdahs{0}{k} - t_k.
\end{align*}
Furthermore, the supremum in (\ref{eq:lstar}) is achieved at some $\lambda\in (-1,0)$ for $j=1$, and $\lambda\in (0,1)$ for $j=0$.
For $k\geq 2$, we have
\begin{align*}
\Lambdahs{1}{k} & = \frac{\Lambdahs{1}{k-1}(\Lambdahs{0}{k-1} - t_k)}{\Lambdahs{0}{k-1} + \Lambdahs{1}{k-1}}, \\
\Lambdahs{0}{k} & = \frac{\Lambdahs{0}{k-1}(\Lambdahs{1}{k-1} + t_k)}{\Lambdahs{0}{k-1} + \Lambdahs{1}{k-1}}.
\end{align*}
\end{Lemma}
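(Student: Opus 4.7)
The plan is to verify all three identities by direct computation, proceeding by induction on $k$ and exploiting the structural fact that, for $k\geq 2$, the function $\lambda\mapsto \Lambda_{j,k-1}(\gamma,t^{(k-1)};\lambda)$ is the pointwise maximum of just two affine functions (as illustrated in the figure). Its Fenchel-Legendre transform therefore reduces to maximizing the minimum of two affine functions of $\lambda$, which can be solved in closed form by intersecting the two lines.

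For the base case $k=1$, the standard change-of-measure identity gives
\[
\Lambda_{1,0}(\gamma;\lambda)=\log\E_0\Big[\Big(\ddfrac{\P_1^\gamma}{\P_0^\gamma}\Big)^{1+\lambda}\Big]=\Lambda_{0,0}(\gamma;1+\lambda),
\]
so the substitution $\mu=1+\lambda$ in the definition of $\Lambdahs{1}{1}$ immediately produces $\Lambdahs{1}{1}=\Lambdahs{0}{1}-t_1$ and shifts any optimizer $\mu\in(0,1)$ for $j=0$ into an optimizer $\lambda=\mu-1\in(-1,0)$ for $j=1$. That the supremum for $j=0$ lies in $(0,1)$ follows from convexity of $\Lambda_{0,0}(\gamma;\cdot)$, the boundary values $\Lambda_{0,0}(\gamma;0)=\Lambda_{0,0}(\gamma;1)=0$, the derivatives $\Lambda_{0,0}'(\gamma;0)=-\KLD{\P_0^\gamma}{\P_1^\gamma}$ and $\Lambda_{0,0}'(\gamma;1)=\KLD{\P_1^\gamma}{\P_0^\gamma}$, together with the assumption (\ref{t1interval}) that $t_1$ lies strictly between these two derivatives.

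For the inductive step $k\geq 2$, assume that both $\Lambdahs{0}{k-1}$ and $\Lambdahs{1}{k-1}$ are strictly positive. Substituting
\[
\Lambda_{j,k-1}(\gamma,t^{(k-1)};\lambda)=\max\bigl\{-\Lambdahs{1}{k-1}(j+\lambda),\ \Lambdahs{0}{k-1}(j-1+\lambda)\bigr\}
\]
into $\Lambdahs{j}{k}=\sup_\lambda\{\lambda t_k-\Lambda_{j,k-1}(\gamma,t^{(k-1)};\lambda)\}$ converts the supremum into the supremum over $\lambda$ of the minimum of two affine functions of $\lambda$. Condition (\ref{tkinterval}) on $t_k$ forces one affine piece to have positive slope $t_k+\Lambdahs{1}{k-1}$ and the other negative slope $t_k-\Lambdahs{0}{k-1}$, so the supremum is attained at their unique intersection $\lambda^*$. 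Solving the resulting linear equation gives $\lambda^*=\Lambdahs{0}{k-1}/(\Lambdahs{0}{k-1}+\Lambdahs{1}{k-1})\in(0,1)$ for $j=0$ and $\lambda^*=-\Lambdahs{1}{k-1}/(\Lambdahs{0}{k-1}+\Lambdahs{1}{k-1})\in(-1,0)$ for $j=1$; plugging these optimizers back in yields precisely the two recursions stated in the lemma. Subtracting those recursions and simplifying produces $\Lambdahs{0}{k}-\Lambdahs{1}{k}=t_k$, closing the induction.

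No individual step is genuinely hard; the only bookkeeping requirement is to verify that the hypotheses (\ref{barx}), (\ref{t1interval}) and (\ref{tkinterval}) truly force the two affine pieces of $\Lambda_{j,k-1}$ to have slopes of opposite sign, so that the supremum of the minimum is both finite and attained at an interior $\lambda^*$ in the asserted interval. Once that sign check is in place, the rest of the proof reduces to solving a pair of linear equations, which is why the authors dispatch the lemma as "straightforward algebraic manipulations."
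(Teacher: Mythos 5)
Your proof is correct and takes exactly the route the paper intends: the authors omit the argument as ``straightforward algebraic manipulations,'' and your computation --- the change of measure $\Lambda_{1,0}(\gamma;\lambda)=\Lambda_{0,0}(\gamma;\lambda+1)$ for the base case, followed by taking the Fenchel--Legendre transform of a maximum of two affine pieces and locating the optimizer at the intersection of the two lines --- is precisely that omitted computation, consistent with the paper's own depiction of $\Lambda_{0,k-1}$ as a max of two linear functions. The sign checks you flag (positive slope $t_k+\Lambdahs{1}{k-1}$, negative slope $t_k-\Lambdahs{0}{k-1}$, and strict positivity of $\Lambdahs{j}{k-1}$ carried by the induction) are exactly the hypotheses guaranteed by (\ref{barx})--(\ref{tkinterval}) and the paper's preceding feasibility argument, so nothing is missing.
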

\vspace{5pt}

Proposition \ref{prop:UpperBddErr} below, whose proof is provided in the Appendix,
will be our main tool in obtaining upper bounds on error probabilities.
It shows that the Type I and II error exponents are essentially upper bounded by $-\Lambdahs{0}{h}$ and $-\Lambdahs{1}{h}$ respectively.
Recall that $p_n(v)$ is the total number of predecessors of $v$, $l_n(v)$ is the number of leaves in the sub-tree rooted at $v$, and $B_n$ is the set of nodes all of whose immediate predecessors are leaves.

\begin{Proposition}\label{prop:UpperBddErr}
Fix some $h\geq 1$, and
consider a sequence of trees $(T_n)_{n\geq1}$ such that for
all $n\geq n_0$,  $T_n$ is $h$-uniform. Suppose that Assumptions \ref{assumpt:EquivalentMeasures}-\ref{assumpt:BoundedDivergence} hold. Suppose that,
for every $n$, every leaf uses the same transmission function $\gamma\in\Gamma$, which satisfies (\ref{barx}),
and that every level $k$ node ($k\geq1$) uses a LLRQ with threshold $t_k$, satisfying (\ref{t1interval})-(\ref{tkinterval}).
\begin{enumerate}[(i)]
\item\label{it:general}
For all nodes $v$ of level $k\geq1$ and for all $n\geq n_0$, we have
\begin{align*}
\ofrac{l_n(v)} \log \P_1 \Big ( \frac{S_n(v)}{l_n(v)} \leq t_k \Big ) & \leq -\Lambdahs{1}{k} + \frac{p_n(v)}{l_n(v)}-1,\\
\ofrac{l_n(v)} \log \P_0 \Big ( \frac{S_n(v)}{l_n(v)} > t_k \Big ) & \leq -\Lambdahs{0}{k} + \frac{p_n(v)}{l_n(v)}-1.
\end{align*}
\item\label{it:N}
Suppose that for all $n\geq n_0$ and all $v \in {B}_n$, we have $l_n(v) \geq N$.
Then, for all $n\geq n_0$, we have
\begin{align*}
\ofrac{l_n(f)} \log \P_1 \Big ( \frac{S_n(f)}{l_n(f)} \leq t_h \Big ) & \leq -\Lambdahs{1}{h} + \frac{h}{N},\\
\ofrac{l_n(f)} \log \P_0 \Big ( \frac{S_n(f)}{l_n(f)} > t_h \Big ) & \leq -\Lambdahs{0}{h} + \frac{h}{N}.
\end{align*}
\end{enumerate}
\end{Proposition}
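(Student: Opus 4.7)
My plan is to prove part (i) by induction on the level $k$ of the node $v$, tracking the slack $\epsilon_v\geq 0$ defined by $\frac{1}{l_n(v)}\log\P_1(S_n(v)/l_n(v)\leq t_k)\leq -\Lambdahs{1}{k}+\epsilon_v$ through a recursion, and then obtain part (ii) by re-examining that same recursion under the stronger leaf-count assumption. By Lemma \ref{lemma:LambdaRelationship}, the suprema defining $\Lambdahs{1}{k}$ and $\Lambdahs{0}{k}$ are attained at $\lambda\in(-1,0)$ and $\lambda\in(0,1)$ respectively, so the two bounds in the proposition are perfectly symmetric; I will only discuss the Type II bound.

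For the base case $k=1$, a level-$1$ node $v$ receives $l_n(v)$ conditionally i.i.d.\ leaf messages $\gamma(X_u)$, and the classical Cram\'er/Chernoff bound applied to the empirical mean of the log-likelihood ratios gives $\frac{1}{l_n(v)}\log\P_1(S_n(v)/l_n(v)\leq t_1)\leq -\Lambdahs{1}{1}$, matching the claim since $p_n(v)=l_n(v)$. For the inductive step at level $k\geq 2$, let $v$ have immediate predecessors $u_1,\ldots,u_d$ at level $k-1$ and write $\alpha_u=\P_1(Y_u=0)$, $\beta_u=\P_0(Y_u=1)$. A direct expansion of the MGF of the binary variable $L_u=\mathcal{L}_{u,n}(Y_u)$ gives, for $\lambda\in(-1,0)$,
\[
\E_1[e^{\lambda L_u}]=\alpha_u^{1+\lambda}(1-\beta_u)^{-\lambda}+(1-\alpha_u)^{1+\lambda}\beta_u^{-\lambda}\leq \alpha_u^{1+\lambda}+\beta_u^{-\lambda}.
\]
Inserting the inductive bounds $\log\alpha_u\leq l_n(u)(-\Lambdahs{1}{k-1}+\epsilon_u)$ and $\log\beta_u\leq l_n(u)(-\Lambdahs{0}{k-1}+\epsilon_u)$, and then using $\log(e^A+e^B)\leq \max(A,B)+\log 2$ together with $\max(1+\lambda,-\lambda)\leq 1$, produces
\[
\log\E_1[e^{\lambda L_u}]\leq \log 2+l_n(u)\,\Lambda_{1,k-1}(\gamma,t\tn{k-1};\lambda)+l_n(u)\epsilon_u.
\]
By conditional independence of the $L_{u_i}$, applying the Chernoff bound to $S_n(v)=\sum_i L_{u_i}$ and taking the infimum over $\lambda\in(-1,0)$ -- which by Lemma \ref{lemma:LambdaRelationship} attains $-\Lambdahs{1}{k}$ -- yields the core recursion
\[
l_n(v)\epsilon_v\leq d\log 2+\sum_i l_n(u_i)\epsilon_{u_i}.
\]
Combining $\log 2<1$ with the inductive hypothesis $l_n(u_i)\epsilon_{u_i}\leq p_n(u_i)-l_n(u_i)$ and the arithmetic identity $d+\sum_i[p_n(u_i)-l_n(u_i)]=p_n(v)-l_n(v)$ then closes the induction for part (i).

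Part (ii) is obtained by revisiting the recursion $l_n(v)\epsilon_v\leq d\log 2+\sum_i l_n(u_i)\epsilon_{u_i}$. The hypothesis $l_n(w)\geq N$ for $w\in B_n$ propagates upward (since every non-leaf predecessor contains a $B_n$-descendant in an $h$-uniform tree), so each non-leaf $u$ satisfies $l_n(u)\geq N$ and therefore $d\leq l_n(v)/N$; dividing by $l_n(v)$ gives $\epsilon_v\leq \log 2/N+\sum_i (l_n(u_i)/l_n(v))\epsilon_{u_i}$, whose second term is a convex combination of the $\epsilon_{u_i}$. Letting $a_k=\max\{\epsilon_v:v\text{ at level }k\}$ yields the linear recursion $a_k\leq \log 2/N+a_{k-1}$ with $a_1=0$, and hence $\epsilon_f\leq(h-1)\log 2/N\leq h/N$. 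The main obstacle throughout is arranging the MGF estimate so that the $\log 2$ slack per immediate predecessor -- introduced by $\log(e^A+e^B)\leq\max(A,B)+\log 2$ -- is absorbed into the integer ``$+1$ per non-leaf'' encoded in $p_n(v)/l_n(v)-1$ in part (i), and into the leaf budget $N$ in part (ii); the ingredient that makes this accounting balance is precisely Lemma \ref{lemma:LambdaRelationship}, which confines the Chernoff optimizer to $(-1,0)$ (equivalently $(0,1)$) so that both $1+\lambda$ and $-\lambda$ are bounded by $1$.
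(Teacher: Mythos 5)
Your argument is correct and follows essentially the same route as the paper's proof: induction on the level, the bound $\E_1[e^{\lambda L_{u}}]\leq\P_1(Y_{u}=0)^{1+\lambda}+\P_0(Y_{u}=1)^{-\lambda}$, the inequality $\log(a+b)\leq\max\{\log 2a,\log 2b\}$ to absorb the binary-message MGF into $\Lambda_{1,k-1}(\gamma,t\tn{k-1};\lambda)$ plus an additive $\log 2$ per node, and the Chernoff bound with the optimizing $\lambda$ confined to $(-1,0)$ by Lemma \ref{lemma:LambdaRelationship}. The only (harmless) deviation is in part (\ref{it:N}), where the paper just applies part (\ref{it:general}) at $k=h$ and bounds $p_n(f)/l_n(f)-1\leq h/N$ directly from $n-l_n(f)\leq h|B_n|\leq h\,l_n(f)/N$, whereas you re-run the recursion level by level to get $(h-1)\log 2/N$; both give the stated $h/N$.
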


\section{Optimal Error Exponent}\label{sect:OptimalError}

In this section, we show that the Type II error probability in a sequence of bounded height trees falls exponentially
fast with the number of nodes. We derive sufficient conditions for the error exponent to be the same as that of a parallel configuration.
We show that if almost all of the nodes are leaves, i.e., $z=1$, then $g^*_P=g^*=g^*_R$. The condition $z=1$ is also equivalent to another condition that requires that the proportion of leaves attached to bounded degree nodes
vanishes asymptotically.
We also show that under some additional mild assumptions, this sufficient condition is necessary. We start with some graph-theoretic preliminaries.

\subsection{Properties of Trees.}
In this section, we define various quantities associated with a tree, and derive a few elementary relations that will be used later.

Recall that
$B_n$ is the set of non-leaf nodes all of whose predecessors are leaves.
(For an $h$-uniform tree, $B_n$ is the set of all level 1 nodes.)
For $N >0$, let
\begin{align}\label{F_Nn}
F_{N,n} & = \{ v \in B_n : l_n(v) \leq N \},
\qquad  F^c_{N,n}  = \{ v \in B_n : l_n(v) > N \},
\end{align}
and
\begin{align}\label{q_Nn}
q_{N,n} = \ofrac{l_n(f)} \sum_{v \in F_{N,n}} l_n(v),
\end{align}
where the sum is taken to be zero if the set $F_{N,n}$ is empty. Let
$q_N = \limsup\limits_{n\to\infty} q_{N,n}$.
For a sequence of $h$-uniform trees,
this is the asymptotic proportion of leaves that belong to ``small'' subtrees in the network.

It turns out that it is easier to work with $h$-uniform trees. For this reason, we show how to transform any
tree of height $h$ to an $h$-uniform tree.
\vspace{6pt}

\noindent
\textbf{Height Uniformization Procedure.}
Consider a tree $T_n=(V_n,E_n)$ of height $h$, and a node $v$ that has at least one leaf as an immediate predecessor ($v \in A_n$).
Let $D_n$ be the set of leaves that are immediate predecessors of $v$, and whose
paths to the fusion center $f$ are of length $ k < h$.
Add $h-k$ nodes, $\{u_j: j=1,\ldots,h-k\}$, to $V_n$; remove the edges $(u,v)$, for all $u \in D_n$;
add the edges $(u_1,v)$, and $(u_{j+1},u_j)$, for $j=1,\ldots,h-k-1$; add the edges $(u,u_{h-k})$, for all $u \in D_n$.
This procedure is repeated for all $v \in A_n$. The resulting tree is $h$-uniform. $\square$
\vspace{6pt}

The height uniformization procedure essentially adds more nodes to the network, and
re-attaches some leaves, so that the path from every leaf has exactly $h$ hops.
Let $(T'_n=(V_n', E_n') )_{n\geq 1}$ be the new sequence of $h$-uniform trees obtained from $(T_n)_{n\geq 1}$, after applying the uniformization procedure. (We are abusing notation here in that $T'_n$ typically does not have $n$ nodes,
nor is the sequence $|V_n'|$ increasing.)
Regarding notation,
we adopt the convention that quantities marked with a prime are defined with respect to $T_n'$.

Note that $l_n'(f) = l_n(f)$. For the case of a relay network, it is seen that any function of the observations at the leaves that can be computed in $T_n'$ can also be computed in $T_n$. Thus, the detection performance of
$T_n'$ is no better than that of $T_n$. Hence, we obtain
\begin{align}\label{eq:un}
g^*_R \leq \limsup_{n\to\infty} \ofrac{l_n'(f)} \log \beta^*(T_n').
\end{align}
Therefore, any upper bound derived for $h$-uniform trees, readily translates to an upper bound for general trees. On the other hand, the coefficients $q_N$ for the $h$-uniform trees $T'_n$ (to be denoted by $q'_N$) are different from the coefficients $q_N$ for the original sequence $T_n$. They are related as follows. The proof is given in the Appendix.

\begin{Lemma}\label{lemma:Uniformizedq_N}
For any $N, M >0$, we have
\begin{align*}
q'_N \leq h(N q_M + N/M).
\end{align*}
In particular, if $q_N=0$ for all $N>0$, then $q'_N=0$ for all $N>0$.
\end{Lemma}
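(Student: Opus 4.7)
The plan is to first convert $q'_N$ into a sum over $A_n$ in the original tree $T_n$, and then to compare it to the $B_n$-sum that defines $q_M$. For each $v\in A_n$, let $r_v$ denote the number of immediate predecessors of $v$ that are leaves of $T_n$. The key structural observation is a bijection $A_n\leftrightarrow B_n'$ induced by the uniformization procedure: if $v\in A_n$ lies at level $1$, then its leaf immediate predecessors are already at level $0$, no chain is inserted, and $v$ itself lies in $B_n'$ with $l_n'(v)=r_v$; if $v\in A_n$ lies at level $\geq 2$, the procedure appends a chain whose bottom node $u_{h-k}$ ends up at level $1$ in $T_n'$ with exactly the $r_v$ leaves of $v$ re-attached to it, so $l_n'(u_{h-k})=r_v$. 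Conversely, every element of $B_n'$ is a level-$1$ node of $T_n'$ and arises in exactly one of these two ways. Using $l_n'(f)=l_n(f)$, this yields
\begin{align*}
l_n(f)\,q'_{N,n} \;=\; \sum_{v\in A_n,\, r_v\leq N} r_v.
\end{align*}

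Let $G=\{v\in A_n:r_v\leq N\}$ and decompose $G=(G\cap B_n)\cup(G\setminus B_n)$. For $v\in G\cap B_n$ one has $r_v=l_n(v)$, and I would split further by whether $v\in F_{M,n}$: this piece contributes at most $\sum_{F_{M,n}}l_n(w)=l_n(f)\,q_{M,n}$, while $v\in F^c_{M,n}$ contributes at most $N\,|F^c_{M,n}|\leq N\,l_n(f)/M$ (distinct nodes of $F^c_{M,n}$ have disjoint sub-trees containing more than $M$ leaves each). For $v\in G\setminus B_n$, the crucial fact is that every $v\in A_n\setminus B_n$ has a $B_n$ node strictly below it (descend through a chain of non-leaf immediate predecessors; the process must terminate by finiteness of the subtree). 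Fix any such assignment $\phi:G\setminus B_n\to B_n$; then $\phi(v)$ is a strict predecessor of $v$, equivalently $v$ is a strict successor of $\phi(v)$. In a tree of height $h$, every node at level $\geq 1$ has at most $h-1$ strict successors, so $|\phi^{-1}(w)|\leq h-1$ and therefore $|G\setminus B_n|\leq(h-1)\,|B_n|$.

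Combining with $|B_n|\leq l_n(f)(q_{M,n}+1/M)$, obtained from the same $F_{M,n}/F^c_{M,n}$ split, I would write
\begin{align*}
q'_{N,n} \;\leq\; \bigl(1+N(h-1)\bigr)\,q_{M,n} + Nh/M \;\leq\; h\bigl(Nq_{M,n}+N/M\bigr),
\end{align*}
where the last inequality uses $1+N(h-1)\leq Nh$ for $N\geq 1$; the case $N<1$ is trivial since every node of $B_n'$ has at least one leaf, forcing $q'_{N,n}=0$. Taking $\limsup$ in $n$ gives $q'_N\leq h(Nq_M+N/M)$, and the ``in particular'' conclusion follows by fixing $N$ and letting $M\to\infty$ while using $q_M=0$.

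The main obstacle I expect is verifying the bijection $A_n\leftrightarrow B_n'$ carefully, in particular tracking how the uniformization procedure handles nodes of $A_n\setminus B_n$ whose immediate predecessors include both leaves and non-leaves at the same tree level. Once that correspondence is secured, what remains is a clean height-bounded double-counting argument whose only essential ingredient beyond elementary combinatorics is the uniform bound of $h-1$ on the number of strict successors of any non-root node.
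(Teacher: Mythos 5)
Your proof is correct and follows essentially the same route as the paper: bound the contribution of the small subtrees of $T_n'$ by $N$ times the number of level-1 nodes, compare that count to $|B_n|$ with a factor of $h$, and then split $B_n$ into $F_{M,n}$ and $F_{M,n}^c$ to get $|B_n|\leq l_n(f)(q_{M,n}+1/M)$. The only real difference is that you supply, via the bijection $A_n\leftrightarrow B_n'$ and the bound $|A_n\setminus B_n|\leq (h-1)|B_n|$, a proof of the inequality $|B_n'|\leq h|B_n|$ that the paper merely asserts ("it can be shown that"), and your slightly sharper intermediate estimate is then relaxed to the stated bound.
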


It turns out that the condition $z=1$ is equivalent to the condition $q_N=0$ for all $N>0$.
The proof is provided in the Appendix.
\begin{Lemma}\label{lemma:Equivalence}
We have
$z=1$ iff $q_N=0$ for all $N>0$.
\end{Lemma}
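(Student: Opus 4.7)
My plan is to prove the two implications separately, with the reverse direction containing all the real work.

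The forward implication $z=1 \Rightarrow q_N=0$ is essentially immediate. Since $l_n(f)\le n$, the assumption $z=1$ forces $l_n(f)/n\to 1$ and hence $(n-l_n(f))/l_n(f)\to 0$. Since $F_{N,n}\subseteq B_n$ is a set of non-leaves and $l_n(v)\le N$ on $F_{N,n}$, one gets
\[
q_{N,n} \;\le\; \frac{N\,|F_{N,n}|}{l_n(f)} \;\le\; \frac{N\,(n-l_n(f))}{l_n(f)} \;\longrightarrow\; 0.
\]

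For the converse, assuming $q_N=0$ for every $N>0$, I would establish
\[
n \;\le\; l_n(f)\bigl(1 + h(q_{N,n}+1/N)\bigr) + 1,
\]
then take $\liminf_{n\to\infty}$ of $l_n(f)/n$, and finally let $N\to\infty$ to conclude $z\ge 1$. (Here $l_n(f)\to\infty$ because any height-$h$ tree satisfies $n\le h\,l_n(f)+1$.) This inequality would follow from two ingredients.

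The first, and main, step is a purely graph-theoretic bound
\[
\#\{\text{non-leaves in }T_n\} \;\le\; h\,|B_n| + 1.
\]
The key observation is that every non-leaf in a tree of height at most $h$ is an ancestor of some node in $B_n$: in the subtree rooted at any non-leaf, a non-leaf of minimum subtree-height must itself lie in $B_n$, since otherwise a non-leaf immediate predecessor would have strictly smaller subtree-height, contradicting minimality. Because each $w\in B_n$ has at most $h$ non-root ancestors (counting itself), the bound follows. This is where the height bound is used.

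The second step bounds $|B_n|$ via the hypothesis. Since distinct nodes of $B_n$ have disjoint subtrees, $\sum_{v\in B_n} l_n(v)\le l_n(f)$, from which $|F_{N,n}^c|\le l_n(f)/N$ (each term in the corresponding sum exceeds $N$). Meanwhile $|F_{N,n}|\le \sum_{v\in F_{N,n}} l_n(v) = q_{N,n}\,l_n(f)$ because $l_n(v)\ge 1$. Adding the two, $|B_n|\le l_n(f)(q_{N,n}+1/N)$.

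The main obstacle is the first step — tying the global count of non-leaves to $|B_n|$. Without a height bound no such relation exists (a long chain of degree-one non-leaves above a single $B_n$-node can be arbitrarily deep). Once that structural bound is in hand, substituting into $n = l_n(f)+\#\{\text{non-leaves}\}$ and passing to limits is straightforward.
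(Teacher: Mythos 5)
Your proof is correct and follows essentially the same route as the paper's: the same estimates $|F_{N,n}|\le \sum_{v\in F_{N,n}} l_n(v)=q_{N,n}\,l_n(f)$ and $|F_{N,n}^c|\le l_n(f)/N$, combined with the same structural bound that the non-leaves number at most $h|B_n|$ (up to the root) because every non-leaf lies on a path from some $v\in B_n$ to $f$. The only cosmetic differences are that you prove the forward implication directly rather than via the contrapositive, and you spell out the minimality argument that the paper dismisses with ``it can be seen.''
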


\subsection{An Upper Bound}
In this section, we develop an upper bound on the Type II error probabilities, which takes into account some qualitative properties of the sequence of trees, as captured by $q_N$.

\begin{Lemma}\label{lemma:upperbound2}
Consider an $h$-uniform sequence of trees $(T_n)_{n\geq1}$, and suppose that Assumptions \ref{assumpt:EquivalentMeasures}-\ref{assumpt:LLRQ}
hold. For every $\epsilon > 0$,
there exists some $N$ such that
$$g^*_R \leq (1-q_N) (g^*_P +\epsilon).$$
\end{Lemma}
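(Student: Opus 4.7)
The plan is to construct an admissible 1-bit relay strategy on $T_n$ that, restricted to the pruned tree $\hat T_n$ obtained by removing the small subtrees rooted at $F_{N,n}$, implements a level-wise common-threshold LLRQ strategy to which Proposition~\ref{prop:UpperBddErr}(\ref{it:N}) applies. Fix $\epsilon>0$ (the bound is trivial if $g^*_P+\epsilon\geq 0$, so assume $g^*_P+\epsilon<0$). By (\ref{errorexpclassical}) and Assumption~\ref{assumpt:EquivalentMeasures}, pick $\gamma\in\Gamma$ satisfying (\ref{barx}) with $\KLD{\P_0^\gamma}{\P_1^\gamma}>-g^*_P-\epsilon/3$. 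Using Lemma~\ref{lemma:LambdaRelationship}, inductively pick thresholds $t_k$ in the feasible intervals (\ref{t1interval})-(\ref{tkinterval}), each just above the left endpoint ($-\KLD{\P_0^\gamma}{\P_1^\gamma}$ for $k=1$, and $-\Lambdahs{1}{k-1}$ for $k>1$), so that $\Lambdahs{1}{k}$ stays close to $\Lambdahs{1}{k-1}$ (or to $\KLD{\P_0^\gamma}{\P_1^\gamma}$ when $k=1$) while $\Lambdahs{0}{k}>0$. This makes $\Lambdahs{1}{h}\geq -g^*_P-2\epsilon/3$. Finally, choose $N$ large enough that $h/N<\min\{\epsilon/3,\Lambdahs{0}{h}\}$.

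Now define the strategy. Let $\hat l_n(v)$ denote the number of leaves that lie in the subtree of $T_n$ rooted at $v$ and descend from some $v'\in F_{N,n}^c$; in particular, $\hat l_n(f)=(1-q_{N,n})l_n(f)$. Active leaves (descendants of $F_{N,n}^c$ nodes) use $\gamma$; each $v\in F_{N,n}^c$ uses $\mathrm{LLRQ}_{|C_n(v)|,\,t_1}$; all inactive leaves and all $v\in F_{N,n}$ emit an independent random bit whose distribution is identical under $\P_0$ and $\P_1$, so its log-likelihood contribution is identically zero (randomization is permitted by the paper's footnote); each level-$k$ node $v$ with $k>1$, including the fusion center, uses $\mathrm{LLRQ}_{|C_n(v)|,\,t_k\hat l_n(v)/l_n(v)}$, which lies in $\Gamma$ by Assumption~\ref{assumpt:LLRQ}. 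Because inactive contributions vanish, $S_n(v)=\hat S_n(v)$ at every node, so the test $S_n(v)/l_n(v)>t_k\hat l_n(v)/l_n(v)$ coincides with the common-threshold test $\hat S_n(v)/\hat l_n(v)>t_k$ performed on $\hat T_n$. The pruned tree $\hat T_n$ is $h$-uniform with $l_{\hat T_n}(v)>N$ for every $v\in B_{\hat T_n}$, so Proposition~\ref{prop:UpperBddErr}(\ref{it:N}) yields
\begin{align*}
\frac{1}{\hat l_n(f)}\log\P_1\bigl(\hat S_n(f)/\hat l_n(f)\leq t_h\bigr) &\leq -\Lambdahs{1}{h}+h/N,\\
\frac{1}{\hat l_n(f)}\log\P_0\bigl(\hat S_n(f)/\hat l_n(f)>t_h\bigr) &\leq -\Lambdahs{0}{h}+h/N<0.
\end{align*}
When $q_N<1$, $\hat l_n(f)\to\infty$ and so the Type I probability vanishes, making the strategy admissible for all large $n$; the case $q_N=1$ yields the trivial bound $0\geq g^*_R$, valid by Theorem~\ref{thm:MainResults}(\ref{MR:1}). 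Substituting $\hat l_n(f)=(1-q_{N,n})l_n(f)$ into the Type II bound and noting that $a\mapsto(1-a)(g^*_P+\epsilon)$ is increasing on $[0,1]$ when $g^*_P+\epsilon<0$, passing to $\limsup_n$ gives $g^*_R\leq(1-q_N)(g^*_P+\epsilon)$.

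The central difficulty is aligning the $T_n$-strategy with the common-threshold LLRQ strategy on $\hat T_n$ demanded by Proposition~\ref{prop:UpperBddErr}(\ref{it:N}), since the LLRQ definition normalizes by $l_n(v)$ which counts inactive leaves. The simultaneous use of the per-node threshold rescaling $t_k\hat l_n(v)/l_n(v)$ together with hypothesis-invariant (randomized) messages from $F_{N,n}$-rooted subtrees is what resolves this, producing an LLRQ computation in $T_n$ exactly equivalent to the canonical common-threshold LLRQ computation on $\hat T_n$. A secondary technical step is the iterative threshold choice, which relies on the recursion in Lemma~\ref{lemma:LambdaRelationship}: taking $t_k$ just above $-\Lambdahs{1}{k-1}$ makes $\Lambdahs{0}{k-1}-t_k$ close to $\Lambdahs{0}{k-1}+\Lambdahs{1}{k-1}$, so $\Lambdahs{1}{k}\approx\Lambdahs{1}{k-1}$, and inductively $\Lambdahs{1}{h}$ can be pushed arbitrarily close to $\KLD{\P_0^\gamma}{\P_1^\gamma}$ while keeping each $\Lambdahs{0}{k}>0$.
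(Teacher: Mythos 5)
Your proof is correct and follows essentially the same route as the paper's: choose a near-optimal $\gamma$, discard the small subtrees rooted at $F_{N,n}$, run leaf-$\gamma$/LLRQ tests on the surviving $h$-uniform subtree, and invoke Proposition~\ref{prop:UpperBddErr}(\ref{it:N}); your device of having the discarded nodes emit hypothesis-independent bits and rescaling each threshold by $\hat l_n(v)/l_n(v)$ is just a more explicit rendering of the paper's remark that the pruned-tree strategy is a full-tree strategy with idle nodes, with $l_n(v)$ replaced by $l_n''(v)$ in the LLRQ normalization. The one substantive divergence is the threshold selection: the paper takes a single common threshold $t_k=t=-\KLD{\P_0^\gamma}{\P_1^\gamma}+\epsilon/2$ at every level and, via the identity $\Lambdahs{1}{h}=\Lambdahs{0}{h}-t$ from Lemma~\ref{lemma:LambdaRelationship}, gets $-\Lambdahs{1}{h}+h/N=t-\Lambdahs{0}{h}+h/N\leq t\leq g^*_P+\epsilon$ in one line (this also delivers the ``same threshold everywhere'' form quoted later in Proposition~\ref{prop:NeymanPearsonSuff}), whereas you track $\Lambdahs{1}{h}$ through the recursion with level-dependent thresholds near the left endpoints --- valid, but more bookkeeping, and it loses the common-threshold property. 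One cosmetic point: in the case $q_N=1$ you only need $g^*_R\leq 0$, which is immediate from $\beta^*_R(T_n)\leq 1$; citing Theorem~\ref{thm:MainResults}(\ref{MR:1}) there would be circular, since its strict negativity is proved using this very lemma.
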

\begin{proof}
If $g^*_P +\epsilon \geq 0$, there is nothing to prove, since $q_N \leq 1$ and $g_R^*\leq 0$. Suppose that $g^*_P +\epsilon < 0$.
Choose $\gamma\in \Gamma$ such that
\begin{align*}
-\KLD{\P_0^\gamma}{\P_1^\gamma}
& \leq -\sup_{\gamma'\in\Gamma} \KLD{\P_0^{\gamma'}}{\P_1^{\gamma'}} + \frac{\epsilon}{2} =g^*_P + \frac{\epsilon}{2} < 0.
\end{align*}

Let $t_k = t = -\KLD{\P_0^\gamma}{\P_1^\gamma} + \epsilon/2 \leq g_p^*+\epsilon$, for $k=1,\ldots,h$, and note that
\begin{align}
-\KLD{\P_0^\gamma}{\P_1^\gamma} < t <0. \label{Rangeoft1}
\end{align}
Because of (\ref{Rangeoft1}), we have $\Lambda^*_{0,1}(\gamma,t^{(1)}) > 0$. Furthermore, using Lemma \ref{lemma:LambdaRelationship}, $\Lambda^*_{1,1}(\gamma,t^{(1)})= \Lambda^*_{0,1}(\gamma,t^{(1)})- t > -t$.
Now let $k\geq 2$, and suppose that $\Lambdahs{1}{k-1} > -t$ and $\Lambdahs{0}{k-1} > 0$.
From Lemma \ref{lemma:LambdaRelationship},
\begin{align*}
\Lambdahs{0}{k} & = \frac{\Lambdahs{0}{k-1}(\Lambdahs{1}{k-1} + t)}{\Lambdahs{0}{k-1} + \Lambdahs{1}{k-1}} > 0,
\end{align*}
and
\begin{align*}
\Lambdahs{1}{k} & = \Lambdahs{0}{k} - t_k = \Lambdahs{0}{k} - t > -t.
\end{align*}
Hence, by induction, $t_k$ satisfies (\ref{t1interval})-(\ref{tkinterval}),
so that Proposition \ref{prop:UpperBddErr} can be applied.

Choose $N$ sufficiently large so that $h/N < \Lambdahs{0}{h}$.
If $q_N = 1$, the claimed result holds trivially. Hence, we assume that $q_N \in [0,1)$.
In this case,
for $n$ sufficiently large, there exists at least one node in $B_n$ so that $l_n(v) > N$.
We remove all nodes $v \in B_n$ with $l_n(v) \leq N$, and their immediate predecessors.
Then, we remove all level 2 nodes $v$ that no longer have any predecessors,
and so on. In this way, we obtain an $h$-uniform subtree of $T_n$, to be denoted by $T_n''$.
(Quantities marked with double primes are defined w.r.t.\ $T_n''$.)
We have $l_n''(v) > N$ for all $v\in B_n''$, and $l_n''(f) = \sum_{v \in F_{N,n}^c} l_n(v) =l_n(f) (1-q_{N,n})$.
Consider the following relay strategy on the tree $T_n''$.
(Since this is a subtree of $T_n$, this is also a relay strategy for the tree $T_n$, with some nodes remaining idle.)
The leaves transmit with transmission function $\gamma$, and the other nodes use a 1-bit LLRQ with threshold $t$.
(Note that in the definition
(\ref{eq:x}) of the normalized log-likelihood ratio, the denominator $l_n(v)$ now becomes $l''_n(v)$.)

We first show that the strategy just described is admissible.
We apply part (\ref{it:N}) of Proposition \ref{prop:UpperBddErr} to $T_n''$, to obtain
\begin{align*}
& \limsup_{n \to \infty} \ofrac{l_n(f)} \log \P_0(Y_{f}=1) \\
& = \limsup_{n \to \infty} \frac{l_n''(f)}{l_n(f)} \cdot \ofrac{l_n''(f)} \log \P_0(Y_{f}=1) \\
& \leq (1-q_N) \limsup_{n \to \infty} \ofrac{l_n''(f)} \log \P_0\Big( \frac{S_n(f)}{l_n''(f)} > t\Big)\\
& \leq (1-q_N) \big( -\Lambdahs{0}{h} + \frac{h}{N} \big) < 0,
\end{align*}
hence $\P_0(Y_{f}=1) \leq \alpha$, when $n$ is sufficiently large.

To bound the Type II error probability, we use
Proposition \ref{prop:UpperBddErr} and Lemma \ref{lemma:LambdaRelationship}, to obtain
\begin{align*}
g^*_R
& \leq \limsup_{n \to \infty} \ofrac{l_n(f)} \log \beta^*(T''_n) \nonumber\\
& \leq (1-q_N) \limsup_{n \to \infty} \ofrac{l_n''(f)} \log \P_1 \Big( \frac{S_n(f)}{l_n''(f)} \leq t \Big) \nonumber\\
& \leq  (1-q_N) \big( - \Lambdahs{1}{h} + \frac{h}{N} \big) \\
& = (1-q_N) \big( t - \Lambdahs{0}{h} + \frac{h}{N} \big) \\
& \leq (1-q_N) t\\
& \leq (1-q_N) \big( g_P^* + \epsilon \big).
\end{align*}
This proves the lemma.
\end{proof}

\subsection{Exponential decay of error probabilities}\label{subsect:ExponentialDecay}

We now establish that Type II error probabilities decay exponentially. The bounded height assumption is crucial for this result. Indeed, for the case of a tandem configuration, the exponential decay property does not seem to hold.

\begin{Proposition}\label{prop:OtherTypes}
Consider a sequence of trees of height $h$, and let
Assumptions \ref{assumpt:EquivalentMeasures}-\ref{assumpt:LLRQ} hold.
Then,
\begin{align*}
-\infty < g^*_P \leq g^*_R <0 \qquad \textrm{and}\qquad
-\infty < - \KLD{\P_0^X}{\P_1^X} \leq g^* <0.
\end{align*}
\end{Proposition}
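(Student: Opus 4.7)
The finiteness claims and the left inequalities are essentially immediate. Assumption \ref{assumpt:BoundedDivergence} gives $\KLD{\P_0^X}{\P_1^X}<\infty$, hence $-\KLD{\P_0^X}{\P_1^X}>-\infty$; the data-processing inequality gives $\KLD{\P_0^\gamma}{\P_1^\gamma}\leq \KLD{\P_0^X}{\P_1^X}$ for every $\gamma\in\Gamma$, so $g_P^* = -\sup_\gamma \KLD{\P_0^\gamma}{\P_1^\gamma} > -\infty$; the remaining left inequalities $g_P^*\leq g_R^*$ and $-\KLD{\P_0^X}{\P_1^X}\leq g^*$ come directly from (\ref{eq:gg}). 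The substantive content is the strict negativity of $g_R^*$ and $g^*$.

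For $g^*<0$, I would first observe that in any tree of height at most $h$ a counting argument gives $l_n(f)\geq n/(h+1)$: writing $\sum_v l_n(v)=\sum_{\text{leaf }u}|\{v: u\text{ is in the subtree of }v\}|$, each leaf contributes at most $h+1$ terms, while $\sum_v l_n(v)\geq n$ since $l_n(v)\geq 1$ for every $v$. Hence $z\geq 1/(h+1)>0$, and combining with the bound $g^*\leq z g_R^*$ from (\ref{eq:gg}) reduces $g^*<0$ to $g_R^*<0$.

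For $g_R^*<0$, I would reduce to the $h$-uniform case via the height uniformization procedure, noting that $g_R^*(T_n)\leq \limsup_n \log\beta^*(T_n')/l_n'(f)$ as argued around (\ref{eq:un}). On $T_n'$ I apply Lemma \ref{lemma:upperbound2} with some $\epsilon\in(0,|g_P^*|)$, yielding an $N$ such that $g_R^*\leq (1-q'_N)(g_P^*+\epsilon)$. When $q'_N<1$ for some $N$ this already gives $g_R^*<0$. The remaining case is the pathological regime $q'_N=1$ for all $N$ (exemplified by a tree consisting of many disjoint length-$h$ chains), in which almost all of the $l_n'(f)$ leaves lie in singleton level-1 subtrees, and so $|B_n'|/l_n'(f)$ is bounded below by $1/(2N)$. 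In that regime I would construct a ``bit-propagation'' strategy: designate one active leaf per level-1 node (the remaining leaves transmit a fixed constant, making their log-likelihood ratios vanish); each level-1 node applies an $\mathrm{LLRQ}$ with threshold $0$, which, by the computation following (\ref{eq:x}) and Assumption \ref{assumpt:LLRQ}, acts as a forwarder of the active leaf's $\gamma$-bit; at every higher level use an $\mathrm{LLRQ}$ with threshold $0$. The recursion in Lemma \ref{lemma:LambdaRelationship} at $t_k=0$ gives $\Lambda^*_{0,k}=\Lambda^*_{1,k}=\Lambda^*_{1,k-1}/2$, so $\Lambda^*_{j,h}=\Lambda^*_{j,1}(0)/2^{h-1}=C/2^{h-1}>0$ for a constant $C$ that depends only on $\gamma$. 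A direct Chernoff analysis on the $|B_n'|$ independent binary messages reaching the root then gives a Type II error decaying at rate at least $|B_n'|\cdot c_h$ for some $c_h=c_h(\gamma)>0$, so $\log\beta^*(T_n')/l_n'(f)\leq -c_h/(2N)<0$ in this regime.

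Taking the pointwise minimum of the two strategies for each $n$ yields a uniform bound $g_R^* \leq -\min\{(1/2)|g_P^*-\epsilon|,\,c_h/(2N)\}<0$, completing the proof. The main obstacle is making the bit-propagation analysis uniform in the tree structure: one must verify that with $\mathrm{LLRQ}$ thresholds set to $0$ the binary messages compose correctly through successive levels (accounting for the normalization by $l_n'(v)$ rather than by the number of active predecessors) and that the resulting rate $c_h$ depends only on $h$ and $\gamma$, not on how the singleton level-1 subtrees are arranged beneath the root.
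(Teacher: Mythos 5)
Your handling of the finiteness claims, of the reduction $g^*\leq z g^*_R$ with $z\geq 1/(h+1)$, and of the case where the $N$ supplied by Lemma \ref{lemma:upperbound2} satisfies $q_N<1$, is correct and matches the paper. The gap is in the complementary (``thin'') case, which you yourself flag as ``the main obstacle'' but do not close, and which is exactly where the paper's proof does its real work. Some small points first: the complement of ``$q'_N<1$ for the $N$ from Lemma \ref{lemma:upperbound2}'' is not ``$q'_N=1$ for all $N$''; $q'_N=1$ does not make the level-1 subtrees singletons, only of size at most $N$; and a constant-valued transmission function need not lie in $\Gamma$, so the inactive leaves should be deleted rather than silenced. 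These are all repairable. The substantive problem is the claim that the recursion of Lemma \ref{lemma:LambdaRelationship} at $t_k=0$, giving $\Lambda^*_{j,h}=C/2^{h-1}>0$, plus ``a direct Chernoff analysis,'' yields a Type II exponent proportional to $|B'_n|$. The only Chernoff machinery available is Proposition \ref{prop:UpperBddErr}, whose bound is normalized by $l_n(v)$ and carries the additive slack $p_n(v)/l_n(v)-1$. After you prune to one active leaf per level-1 node, the tree sits precisely in the regime where this slack is of order $h-1$ (it equals $h-1$ for disjoint length-$h$ chains), which in general swamps $\Lambda^*_{1,h}\leq C/2^{h-1}$ and renders the bound vacuous. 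The intermediate levels also apply LLRQs normalized by $l_n(v)$ (cf.\ Eq.\ (\ref{eq:x})), not by the number of level-1 descendants, so even setting up a from-scratch Chernoff bound on the $|B'_n|$ bits requires re-deriving Proposition \ref{prop:UpperBddErr} with a different normalization, uniformly over how those bits are routed to the root. That is the unproven step.

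The paper closes this case by induction on $h$, which sidesteps the issue entirely: prune each $v\in B_n$ to a single leaf $u$ transmitting $\gamma(X_u)$, note that $v$ then merely forwards, so the network is equivalent to an $(h-1)$-uniform tree whose leaves are the former $B_n$-nodes observing $X_v$; the inequality $|B_n|/l_n(f)\geq q_{N,n}/N$ then converts the induction hypothesis ($g^*_R<0$ for height $h-1$, with base case $h=1$ from \cite{Tsi:88}) into a strictly negative exponent per original leaf. Your bit-propagation strategy is essentially the $h$-fold unrolling of this induction, but without the inductive bookkeeping the uniformity over the tree shape between level 1 and the root is exactly what is missing. Replacing your ``direct Chernoff analysis'' by this reduction to an $(h-1)$-uniform tree and an induction on $h$ makes the argument go through.
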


\begin{proof}
The lower bounds on $g^*_R$ and $g^*$ follow from (\ref{eq:gg}).
Note that $g^*_P$ cannot be equal to $-\infty$ because it cannot be better than the error exponent of a parallel configuration in which all the observations are provided uncompressed to the fusion center. The error exponent in the latter case is $-\KLD{\P_0^X}{\P_1^X}$, by Stein's Lemma, and is finite as a consequence of Assumption \ref{assumpt:BoundedDivergence}.

It remains to show that the optimal error exponents are negative. Every tree of height $h$ satisfies $n \leq l_n(f) h+1$.
From (\ref{eq:gg}), we obtain $g^* \leq g^*_R/h$.
Therefore, we only need to show that $g^*_R<0$.
As discussed in connection to (\ref{eq:un}), we can restrict attention to a sequence of $h$-uniform trees.

We use induction on $h$. If $h=1$, we have a parallel configuration and the result follows from \cite{Tsi:88}.
Suppose that the result is true for all sequences of
$(h-1)$-uniform trees. Consider now a sequence of $h$-uniform trees.
Let $\epsilon>0$ be such that $g^*_P+\epsilon <0$.
From Lemma \ref{lemma:upperbound2}, there exists some $N$ such that
$g^*_R \leq (1-q_N) (g^*_P +\epsilon)$. If $q_N<1$, we readily obtain the inequality
$g^*_R<0$.

Suppose now that $q_N=1$. We
only need to consider a sequence $(n_k)_{k\geq1}$ such that
$\lim\limits_{k\to\infty} q_{N,n_k} = 1$.
Using the inequality (\ref{eq:F}), we have
$$
\frac{|F_{N,n_k}|}{l_{n_k}(f)}
 \geq \frac{q_{N,n_k}}{N},$$
and
\begin{equation}
\liminf_{k \to \infty} \frac{|F_{N,n_k}|}{l_{n_k}(f)}
\geq \frac{1}{N}.\label{ProportionBdd}
\end{equation}

For each node $v \in B_n$, we remove all of its immediate predecessors (leaves) except for one, call it $u$. The leaf $u$ transmits $\gamma(X_u)$ to its immediate successor $v$. Since node $v$ receives only a single message, it just  forwards it to its immediate successor. The resulting performance is the same
as if the nodes $v$ in $B_n$ were making a measurement $X_v$ and transmitting $\gamma(X_v)$ to their successor.
This is equivalent to deleting all the leaves of $T_n$ to form a new tree, $T''_n$, which is $(h-1)$-uniform.
The above argument shows that
$\beta^*(T_{n_k}) \leq \beta^*(T''_{n_k})$.

We have
$l_{n_k}''(f) = |B_{n_k}|$ and from (\ref{ProportionBdd}),
\begin{align*}
\liminf_{k \to \infty} \frac{|B_{n_k}|}{l_{n_k}(f)} \geq \liminf_{k\to\infty} \frac{|F_{N,n_k}|}{l_{n_k}(f)}
& \geq \frac{1}{N}.
\end{align*}
Therefore,
\begin{align*}
\limsup_{k \to \infty} \ofrac{l_{n_k}(f)} \log \beta^*(T_{n_k})
& \leq \ofrac{N} \limsup_{k \to \infty} \ofrac{l_{n_k}''(f)} \log \beta^*(T_{n_k}'').
\end{align*}
By the induction hypothesis, the right-hand side in the above inequality is negative and the proof is complete.
\end{proof}

\subsection{Sufficient Conditions for Matching the Performance of the Parallel Configuration}\label{subsect:SufficientConditions}

We are now ready to prove the main result of this section. It shows that when $q_N=0$ for all $N>0$, or equivalently when
$z=1$ (cf.\ Lemma \ref{lemma:Equivalence}), bounded height tree networks match the performance of the parallel configuration.

\begin{Proposition}\label{prop:NeymanPearsonSuff}
Consider a sequence of trees of height $h$ in which $z=1$, or equivalently $q_N=0$ for all $N>0$.
Suppose that Assumptions \ref{assumpt:EquivalentMeasures}-\ref{assumpt:LLRQ} hold.
Then,
\begin{align*}
g^*_P=g^*=g^*_R.
\end{align*}
Furthermore, if the sequence of trees is $h$-uniform, the optimal error exponent does not change
even if we restrict to relay strategies in which every leaf uses the same transmission function and all other nodes use a 1-bit LLRQ with the same threshold.
\end{Proposition}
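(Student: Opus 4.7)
The plan is to establish $g^*_R = g^*_P$ via the constructive upper bound in Lemma \ref{lemma:upperbound2}, deduce $g^* \leq g^*_P$ from (\ref{eq:gg}), and then close the loop with a converse $g^* \geq g^*_P$ that uses the hypothesis $z=1$ in an essential way.

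For the first step I would reduce to the $h$-uniform case. By Lemma \ref{lemma:Equivalence}, the hypothesis $z=1$ is equivalent to $q_N=0$ for every $N>0$, and Lemma \ref{lemma:Uniformizedq_N} then guarantees that the uniformized sequence $(T'_n)$ also satisfies $q'_N=0$ for every $N>0$. Lemma \ref{lemma:upperbound2} applied to $(T'_n)$ yields $\limsup_n \frac{1}{l'_n(f)}\log \beta^*(T'_n)\leq g^*_P+\epsilon$ for every $\epsilon>0$, and (\ref{eq:un}) transports this bound back to $(T_n)$, giving $g^*_R\leq g^*_P$. Combined with $g^*_P\leq g^*_R$ from (\ref{eq:gg}) this gives $g^*_R=g^*_P$, and then $g^*\leq z g^*_R$ from (\ref{eq:gg}) together with $z=1$ yields $g^*\leq g^*_P$.

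The hard part is the converse $g^*\geq g^*_P$, since the remark preceding the statement cautions that this inequality fails in general. My plan is to bound the Kullback--Leibler divergence of the messages arriving at the fusion center by exploiting independence across the sub-trees rooted at the immediate predecessors of $f$. Writing $Z_n=\{Y_v:v\in C_n(f)\}$, independence gives $\KLD{\P_0^{Z_n}}{\P_1^{Z_n}}=\sum_{v\in C_n(f)}\KLD{\P_0^{Y_v}}{\P_1^{Y_v}}$. A recursive data-processing argument along the tree---using $\KLD{\P_0^{Y_v}}{\P_1^{Y_v}}\leq \KLD{\P_0^{X_v}}{\P_1^{X_v}}+\sum_{u\in C_n(v)}\KLD{\P_0^{Y_u}}{\P_1^{Y_u}}$ at non-leaves (data processing together with the independence of $X_v$ from the incoming messages) and $\KLD{\P_0^{Y_v}}{\P_1^{Y_v}}\leq -g^*_P$ at leaves---telescopes into
\begin{align*}
\KLD{\P_0^{Z_n}}{\P_1^{Z_n}}\;\leq\;(n-l_n(f)-1)\KLD{\P_0^X}{\P_1^X}+l_n(f)\cdot(-g^*_P).
\end{align*}
Dividing by $n$ and invoking $z=1$ makes the non-leaf term vanish, giving $\limsup_n \frac{1}{n}\KLD{\P_0^{Z_n}}{\P_1^{Z_n}}\leq -g^*_P$. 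A Chernoff--Stein-type converse $\log \beta^*(T_n)\geq -\KLD{\P_0^{Z_n}}{\P_1^{Z_n}}(1+o(1))$, valid here because the messages $Y_v$ are independent with uniformly bounded log-likelihood second moments by Lemma \ref{lemma:BoundedDivergence}, then yields $g^*\geq g^*_P$ and completes the three-way equality $g^*_P=g^*=g^*_R$.

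Finally, the claim restricted to $h$-uniform sequences is essentially an observation about the proof of Lemma \ref{lemma:upperbound2}: the relay strategy constructed there already has every leaf using a common transmission function $\gamma\in\Gamma$ and every non-leaf applying a common-threshold $1$-bit LLRQ. No uniformization is needed in this case, and the upper-bound analysis of the second paragraph applies verbatim, so the restricted class attains the same optimum $g^*_P$.
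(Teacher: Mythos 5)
Your derivation of $g^*_R=g^*_P$ and of $g^*\leq g^*_P$ coincides with the paper's (height uniformization plus Lemma \ref{lemma:Uniformizedq_N}, then Lemma \ref{lemma:upperbound2} and (\ref{eq:gg}) with $z=1$), and your closing remark about the restricted strategy class is the paper's observation as well. The gap is in the converse $g^*\geq g^*_P$. The inequality $\log\beta^*(T_n)\geq -\KLD{\P_0^{Z_n}}{\P_1^{Z_n}}(1+o(1))$ is a \emph{strong} converse of Stein type, and it requires the normalized log-likelihood ratio of $Z_n$ to concentrate around its mean under $\P_0$; without concentration, data processing only yields the weak converse $\log\beta^*(T_n)\geq -\big(\KLD{\P_0^{Z_n}}{\P_1^{Z_n}}+\log 2\big)/(1-\alpha)$, which gives $g^*\geq g^*_P/(1-\alpha)$ and falls short. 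Your justification via Lemma \ref{lemma:BoundedDivergence} does not supply the missing concentration: that lemma bounds second moments of \emph{per-observation} log-likelihood ratios $\log\frac{\ud\P_1^\gamma}{\ud\P_0^\gamma}$, whereas the components of $Z_n$ are the aggregated messages $Y_v$, $v\in C_n(f)$, whose log-likelihood ratios have second moments growing like $l_n(v)^2$ (cf.\ (\ref{log2Bdd})). The hypothesis $z=1$ does not prevent a single immediate predecessor of $f$ from rooting a subtree containing a constant fraction of the leaves; in the extreme case where $f$ has one child $v$ holding all $n-2$ leaves, $Z_n$ is a single binary message, and a strategy with $\P_0(Y_v=1)=\alpha$ and $\P_1(Y_v=0)=\beta_n$ has $\KLD{\P_0^{Z_n}}{\P_1^{Z_n}}\approx -(1-\alpha)\log\beta_n$, so your claimed inequality is violated by a factor of $1/(1-\alpha)$ in the exponent. (Your recursive KL bound on $\KLD{\P_0^{Z_n}}{\P_1^{Z_n}}$ itself is fine; it is the passage from divergence to error probability that breaks.)

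The paper closes exactly this hole with a simulation step you omit: replace $T_n$ by a parallel network in which each of the $l_n(f)$ leaves sends $\gamma_v(X_v)$ and each of the $n-l_n(f)-1$ internal nodes sends its raw observation $X_v$ directly to the fusion center. This network can simulate $T_n$, so its optimal exponent $\tilde g$ satisfies $\tilde g\leq g^*$; and because its fusion-center log-likelihood ratio is now a sum of $n-1$ \emph{independent terms with uniformly bounded second moments} (Lemma \ref{lemma:BoundedDivergence} genuinely applies here), the Chebyshev-plus-change-of-measure argument of Proposition \ref{prop:NecessaryConditions} gives $\tilde g\geq \limsup_n\frac{l_n(f)}{n}g^*_P-\limsup_n\frac{n-l_n(f)-1}{n}\KLD{\P_0^X}{\P_1^X}=g^*_P$ when $z=1$, whence $g^*\geq g^*_P$. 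If you insert this reduction before your Stein-type step, your argument goes through; as written, the converse direction is not established.
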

\begin{proof}
We have shown $g^*_P\leq g^*_R$ in (\ref{eq:gg}). We now prove that
$g^*_R\leq g^*_P$.
As already explained, there is no loss in generality in assuming that the sequence of trees is $h$-uniform
(by performing the height uniformization procedure, and using Lemma \ref{lemma:Uniformizedq_N}).

For any $\epsilon > 0$, Lemma \ref{lemma:upperbound2} yields
$$
g^*_R\leq g^*_P+\epsilon.
$$
Letting $\epsilon\to 0$, we obtain $g^*_R\leq g^*_P$, hence $g^*_R=g^*_P$.
From (\ref{eq:gg}) with $z=1$, we obtain $g^*\leq g^*_R = g^*_P$.

We now show that $g^* \geq g^*_P$.
Consider a tree with $n$ nodes, $l_n(f)$ of which are leaves.
We will compare it with another sensor network in which $l_n(f)$ nodes $v$ transmit
a message $\gamma_v(X_v)$ to the fusion center and $n-l_n(f)-1$ nodes transmit their raw observations to the fusion center. The latter network can simulate the
original network, and therefore its optimal error exponent is at least as good.
By a standard argument (similar to the one in Proposition \ref{prop:NecessaryConditions} below), the optimal error exponent in the latter network can be shown to be greater than or equal to
$$
\limsup_{n\to\infty} \frac{l_n(f)}{n} g^*_P + \limsup_{n\to\infty}-\frac{n-l_n(f)-1}{n} \KLD{\P_0^X}{\P_1^X} = g^*_P,
$$
hence concluding the proof.
\end{proof}

Fix an $\epsilon \in (0, -g_P^*)$.
For any tree sequence with $z=1$, we can perform the height uniformization
procedure to obtain an $h$-uniform sequence of trees. In practice, this height
uniformization procedure may be performed virtually at each node, so that
the tree sequence simulates a $h$-uniform tree sequence.
A simple strategy on the height uniformized tree sequence
that $\epsilon$-achieves the optimal error exponent is a relay strategy in which:
\begin{enumerate}[(i)]
\item all leaves transmit with the same transmission function $\gamma\in\Gamma$, where
$\gamma$ is chosen such that $-\KLD{\P_0^\gamma}{\P_1^\gamma}\leq g_P^* + \epsilon/2$;
\item all other nodes use 1-bit LLRQs with the same threshold $t =-\KLD{\P_0^\gamma}{\P_1^\gamma} + \epsilon/2$.
\end{enumerate}
Lemmas \ref{lemma:Uniformizedq_N} and \ref{lemma:Equivalence}, and the proof of Lemma \ref{lemma:upperbound2}
shows that this
relay strategy $\epsilon$-achieves the optimal error exponent $g_R^*=g^*=g_P^*$.
This also shows that there is no loss in optimality even if we restrict the relay nodes
to use only 1-bit LLRQs. This may be useful in situations where the nodes are simple, low-cost devices.

Proposition \ref{prop:NeymanPearsonSuff}
provides sufficient conditions for a sequence of trees to
achieve the same error
exponent as the parallel configuration. We note a few special cases in which these sufficient conditions are satisfied. The first one is the case where there is a finite bound on the number of nodes that are not leaves.
In that case, $z$ is easily seen to be 1.
This is consistent with the conclusion of Example \ref{eg:TwoHeight}, where a simpler argument was used. The second is the more general case where nodes in $B_n$ are attached to a growing number of leaves, which implies that $q_N=0$ for all $N>0$.

\begin{Corollary}\label{cor:Sufficient}
Suppose that Assumptions \ref{assumpt:EquivalentMeasures}-\ref{assumpt:LLRQ} hold.
Suppose further that either
of the following conditions holds:
\begin{enumerate}[(i)]
\item \label{c:a}There is a finite bound on the number of nodes that are not leaves.
\item \label{c:b} We have $\min_{v\in B_n} l_n(v)\to \infty$.
\end{enumerate}
Then, $g^*_P=g^*=g^*_R$.
\end{Corollary}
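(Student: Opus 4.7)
The plan is to reduce both cases to the hypothesis $z=1$ of Proposition \ref{prop:NeymanPearsonSuff}, and then invoke it directly. Since Lemma \ref{lemma:Equivalence} already tells us that $z=1$ is equivalent to $q_N=0$ for every $N>0$, the task in each case is to verify whichever of these two formulations is more convenient given the hypothesis.

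For condition (\ref{c:a}), I would work with $z$ directly. If $M$ is a finite bound on the number of non-leaf nodes, then $l_n(f) \geq n - M$, so
\begin{align*}
z = \liminf_{n\to\infty} \frac{l_n(f)}{n} \geq \liminf_{n\to\infty} \frac{n-M}{n} = 1,
\end{align*}
and since $z\leq 1$ trivially, we have $z=1$.

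For condition (\ref{c:b}), it is easier to verify the $q_N$ formulation. Fix any $N>0$. By hypothesis, there exists $n_1$ such that for all $n\geq n_1$ and every $v\in B_n$, $l_n(v) > N$. This means $F_{N,n}=\emptyset$ for $n\geq n_1$, hence $q_{N,n}=0$ for all $n\geq n_1$, and therefore $q_N = \limsup_{n\to\infty} q_{N,n} = 0$. Since $N$ was arbitrary, Lemma \ref{lemma:Equivalence} yields $z=1$.

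In either case, $z=1$ holds, and the conclusion $g^*_P = g^* = g^*_R$ follows immediately from Proposition \ref{prop:NeymanPearsonSuff}. There is no real obstacle here: the corollary is essentially a packaging statement that isolates two easily checkable special cases of the more abstract hypothesis $z=1$ used in Proposition \ref{prop:NeymanPearsonSuff}, with the heavy lifting (the equivalence $z=1 \iff q_N=0$ for all $N$, and the matching of the parallel error exponent) already completed in the preceding lemmas and proposition.
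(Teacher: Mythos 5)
Your proof is correct and follows essentially the same route as the paper, which likewise observes that condition (\ref{c:a}) forces $z=1$ directly and that condition (\ref{c:b}) forces $q_N=0$ for all $N>0$, then invokes Lemma \ref{lemma:Equivalence} and Proposition \ref{prop:NeymanPearsonSuff}. Your write-up simply makes explicit the short verifications that the paper leaves as remarks.
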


The above corollary can be applied to Example \ref{eg:InfiniteRelay}. In that example,
every level 1 node has $m$ leaves attached to it, with $m$ growing large as $n$ increases.
Therefore, the tree network satisfies condition \eqref{c:b} in Corollary \ref{cor:Sufficient}, and
the optimal error exponent is $g^*=g^*_R=g^*_P$. In this case, even if the number $N$ of
level 1 nodes grows much faster than $m$, we still achieve the same error exponent as the
parallel configuration. The above proposed strategy, in which every leaf uses the same
transmission function, and every node uses the same LLRQ, will nearly achieve the
optimal performance.

We are now in a position to determine the optimal error exponent in Example \ref{eg:IncreasingLeaves}.

\noindent
{\bf Example \ref{eg:IncreasingLeaves}, revisited:}
Recall that in Example \ref{eg:IncreasingLeaves}, every $v_i \in B_n$ has
$i+1$ of predecessors. It is easy to check that $z=1$. From Proposition \ref{prop:NeymanPearsonSuff},
the optimal error exponent is the same as that for the parallel configuration.
\hfill \EndExample

\subsection{Discussion of the Sufficient Conditions}\label{subsect:DiscussionSufficient}

Proposition \ref{prop:NeymanPearsonSuff} is unexpected as it establishes that
the performance of a tree
possessing certain qualitative properties is comparable to that of the parallel configuration.
Furthermore, the optimal performance is obtained even if we restrict the non-leaf nodes to use 1-bit LLRQs.
At first sight, it might appear
intuitive that if the leaves dominate in a relay tree ($z=1$),
then the tree should always have the same performance as
a parallel configuration. However, this intuition is misleading, as this is not the case
for a Bayesian formulation, in which both the Type I and II error probabilities are required
to decay at the same rate, is involved. To see this, consider the 2-uniform tree in Figure \ref{fig:TwoHeight},
where every node is constrained to sending 1-bit messages. Suppose we are given nonzero prior
probabilities $\pi_0$ and $\pi_1$ for the hypotheses $H_0$ and $H_1$.
Instead of
the Neyman-Pearson criterion, suppose that
we are interested in minimizing the error exponent
\begin{align*}
\limsup_{n\to\infty} \ofrac{l_n(f)} \log P_e^*,
\end{align*}
where $P_e^*$ is the minimum of the error probability $\pi_0\P_0(Y_f=1) + \pi_1\P_1(Y_f=0)$, optimized
over all strategies.
It can be shown that to obtain the optimal error exponent,
we only need to consider the following two fusion rules:
(a) the fusion center declares $H_0$ iff both $v_1$ and $v_2$ send a 0, or (b) the fusion center declares $H_1$ iff both $v_1$ and $v_2$ send a 1. Then, using the results in
\cite{TayTsiWin:C07b}, the optimal error exponent for this tree network is strictly worse
than that for the parallel configuration.
Similarly, if we constrain the Type I error in the Neyman-Pearson criterion
to decay faster than a predetermined rate, it can be shown that the optimal Type II error exponent for a tree network can be strictly worse than that of a parallel configuration.

Note that the bounded height assumption is essential in proving $g^*=g^*_R=g^*_P$, when
$z=1$. Although our technique can be extended to include those tree sequences whose height grows very
slowly compared to $n$ (on the order of $\log|\log(n/l_n(f)-1)|$), we have not been able to
find the optimal error exponent for the general case of unbounded height.
As noted before, in a tandem network, the Bayesian
error probability decays sub-exponentially fast \cite{TayTsiWin:C07c}.
The proof of Proposition 2 in \cite{TayTsiWin:C07c} involves the construction of
a tree network, with unbounded height, and in which $z=1$. In that proof, it is also
shown that such a network has a sub-exponential rate of error decay.
We conjecture that this is also the
case for the Neyman-Pearson formulation.

In summary, for a tree network to achieve the same Type II error exponent as a
parallel configuration, we require that the tree sequence have a bounded height,
satisfy the condition $z=1$, and that
the error criterion be the Neyman-Pearson criterion.
Without any one of these three conditions, our results no longer hold.

\subsection{A Necessary Condition for Matching the Performance of the Parallel Configuration}\label{subsect:NecessaryCondition}

In this section, we establish necessary conditions under which a sequence of relay trees with bounded height
performs as well as a parallel configuration.
As noted in Section \ref{subsect:Assumptions}, any necessary conditions generally depend on the type of transmission
functions available to the relay nodes. However, under an additional condition (Assumption \ref{assumpt:LossOfDivergence}),
the sufficient condition for $g^*_R=g^*_P$ in Proposition \ref{prop:NeymanPearsonSuff} is also necessary.

\begin{Proposition}\label{prop:NecessaryConditions}
Suppose that Assumptions \ref{assumpt:EquivalentMeasures}, \ref{assumpt:BoundedDivergence} and \ref{assumpt:LossOfDivergence} hold, and $h \geq 2$.
If there exists some $N>0$ such that $q_N > 0$ (equivalently, $z < 1$), then
$g^*_P < g^*_R$.
\end{Proposition}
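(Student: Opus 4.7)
\emph{Proof proposal.} The plan is, for every admissible relay strategy on $T_n$, to upper bound the Kullback--Leibler divergence of the aggregate data at the fusion center by a quantity strictly less than $l_n(f)\,|g^*_P|$, and then to convert this into a strict gap for the Type II error exponent via a Stein--type inequality.

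First I will pass to a subsequence $(n_k)$ along which $q_{N,n_k}\to q_N>0$. By Assumption~\ref{assumpt:LossOfDivergence}(a), every $v\in F_{N,n}$ has $l_n(v)\in\{2,\ldots,N\}$. Since $\E_0[\log(\nu_1^\xi/\nu_0^\xi)]=-\KLD{\nu_0^\xi}{\nu_1^\xi}$, Assumption~\ref{assumpt:LossOfDivergence}(b) can be rewritten as
\begin{align*}
\sup_{\xi\in\Gamma(k)\times\Gamma^k}\tfrac{1}{k}\,\KLD{\nu_0^\xi}{\nu_1^\xi}<|g^*_P|,\qquad k=2,\ldots,N.
\end{align*}
Setting $\delta=\min_{2\leq k\leq N}\bigl(|g^*_P|-\sup_{\xi}\tfrac{1}{k}\KLD{\nu_0^\xi}{\nu_1^\xi}\bigr)>0$, I obtain the key per-subtree bound $\KLD{\Pon{v}}{\Pin{v}}\leq l_n(v)(|g^*_P|-\delta)$ for every $v\in F_{N,n}$, since in a relay strategy such a $Y_v$ has exactly the form $\gamma_v(\gamma_1(X_1),\ldots,\gamma_k(X_k))$ covered by Assumption~\ref{assumpt:LossOfDivergence}(b).

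Next I will propagate this bound upward. In a relay strategy, each non-leaf $u\neq f$ produces $Y_u$ as a function of $\{Y_w:w\in C_n(u)\}$ alone, and these messages are independent under each hypothesis; data processing and additivity of KL divergence over independent marginals give $\KLD{\Pon{u}}{\Pin{u}}\leq\sum_{w\in C_n(u)}\KLD{\Pon{w}}{\Pin{w}}$. Unrolling this from $f$ downward, stopping the recursion at every leaf (contribution $\leq|g^*_P|$, since $\gamma\in\Gamma$) and at every $v\in F_{N,n}$ (contribution $\leq l_n(v)(|g^*_P|-\delta)$), and using $\sum_{v\in F_{N,n}}l_n(v)=q_{N,n}\,l_n(f)$, I obtain
\begin{align*}
\sum_{u\in C_n(f)}\KLD{\Pon{u}}{\Pin{u}}
&\leq q_{N,n}\,l_n(f)(|g^*_P|-\delta)+(1-q_{N,n})\,l_n(f)\,|g^*_P|\\
&=l_n(f)\bigl(|g^*_P|-q_{N,n}\,\delta\bigr).
\end{align*}

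Finally, since the messages $\{Y_u:u\in C_n(f)\}$ are mutually independent under each hypothesis, a Stein--type upper bound on the Neyman--Pearson Type II error (of the same flavor as used for the parallel configuration in \cite{Tsi:88}) yields $-\log\beta\leq\sum_{u\in C_n(f)}\KLD{\Pon{u}}{\Pin{u}}+o(l_n(f))$ for every admissible test. Combining this with the previous display and evaluating along $(n_k)$ produces $g^*_R\geq g^*_P+q_N\,\delta>g^*_P$.

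The hard part is the last step: the data at $f$ is independent but not identically distributed, so classical Stein's lemma does not apply directly. The Chebyshev-style concentration behind the Stein limit demands a variance bound of order $l_n(u)$ for the log-likelihood ratio of $Y_u$; Lemma~\ref{lemma:BoundedDivergence} supplies this at the leaves, and one must propagate it through the tree, for instance by tracking second moments (or $\chi^2$-divergences) recursively alongside the KL bounds, so that the cumulative variance at $f$ stays $O(l_n(f))$ and the concentration goes through uniformly in the strategy.
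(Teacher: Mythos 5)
Your KL accounting (the reduction to a subsequence, the per-subtree bound from Assumption~\ref{assumpt:LossOfDivergence}, and the data-processing/additivity propagation up the tree) is sound, and it arrives at exactly the same quantitative bound as the paper: your $\sum_{u\in C_n(f)}\KLD{\Pon{u}}{\Pin{u}}\leq l_n(f)(|g^*_P|-q_{N,n}\delta)$ is the paper's $\lambda_n\geq(1-q_{N,n})g^*_P+q_{N,n}K$, with your $\delta$ equal to $K-g^*_P$. The genuine gap is precisely the step you flag as ``the hard part,'' and the repair you sketch does not close it. The mean-based Stein converse $-\log\beta\leq\sum_{u\in C_n(f)}\KLD{\Pon{u}}{\Pin{u}}+o(l_n(f))$ requires $S_n(f)/l_n(f)$ to concentrate around its $\P_0$-mean, and at the fusion center of the \emph{original} tree this can fail: a relay node $u$ with $l_n(u)=\Theta(l_n(f))$ may emit a binary message with $\P_0(Y_u=0)\approx 1/2$ but $\P_1(Y_u=0)=e^{-\Theta(l_n(u))}$ (e.g., a majority-type rule over its leaves), so that $\mathcal{L}_{u,n}(0)=-\Theta(l_n(u))$, $\var_0(L_{u,n})=\Theta(l_n(u)^2)$, and $S_n(f)/l_n(f)$ has a non-degenerate limiting distribution under $\P_0$ even though $\KLD{\Pon{u}}{\Pin{u}}$ remains $O(l_n(u))$, i.e., fully consistent with your KL bound. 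For such strategies only the weak converse survives ($-\log\beta$ bounded by the divergence divided by $1-\alpha$, plus a constant), and the $1/(1-\alpha)$ factor destroys the strict gap you need. There is no data-processing-type inequality for second moments or variances of log-likelihood ratios under quantization, so ``tracking second moments recursively'' cannot deliver the $O(l_n(f))$ variance bound that Chebyshev requires.

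The missing idea is the paper's preliminary reduction: before any analysis, replace $T_n$ by a height-2 tree $T''_n$ in which the nodes of $F_{N,n}$ keep their leaves but every other leaf transmits directly to $f$. Since $T''_n$ can simulate $T_n$, a lower bound on the Type II error of $T''_n$ transfers to $T_n$, so it suffices to prove the claim for $T''_n$. In $T''_n$ every summand of $S_n(f)$ is either a single-leaf log-likelihood ratio (second moment at most $a$ by Lemma~\ref{lemma:BoundedDivergence}) or $L_{v,n}$ for some $v\in F_{N,n}$ with $l_n(v)\leq N$ (second moment at most $N^2a$, as in the proof of Lemma~\ref{lemma:SnConvergence}), so $\var_0(S_n(f))=O(l_n(f))$, Chebyshev yields the concentration, and the change-of-measure argument goes through. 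If you place this reduction in front of your argument, your KL propagation becomes a one-step computation and the rest of your proof is correct.
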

\begin{proof}
Fix some $N>0$ and suppose that $q_N>0$. Given a tree $T_n$, we construct a
new tree $T''_n$, as follows.
We remove all nodes other than the leaves and the nodes in $F_{N,n}$.
For all the leaves $u$ that are not immediate predecessors of some $v \in F_{N,n}$,
we let $u$ transmit its message directly to the fusion center.
We add new edges $(v,f)$, for each $v\in F_{N,n}$. This gives us a tree $T''_n$ of height 2, with
$l_n''(f) = l_n(f)$ and $q''_{N} = q_{N}$. The latter tree $T''_n$ can simulate the tree $T_n$, hence
the optimal error exponent
associated with the sequence $(T_n)_{n\geq 1}$ is bounded below by the optimal error exponent associated with the sequence $(T''_n)_{n\geq 1}$.
Therefore, without loss of generality, we only need to prove the proposition for a sequence of trees of height 2,
and in which $F_{N,n}=B_n$, for some $N>0$ such that $q_N >0$; we henceforth assume that this is the case. The rest of the argument is similar to the proof of Stein's Lemma in Lemma 3.4.7 of
\cite{DemZei:98}.
Suppose that a particular admissible relay strategy has been fixed, and
let $\beta_n$ be the associated Type II error probability. Let $\lambda_n = \E_0[{S}_n(f)]/l_n(f)$.
We show that ${S}_n(f)/l_n(f)$ is close to $\lambda_n$ in probability.
Let $D_n$ be the set of leaves that transmit directly to the fusion center.
The proof of the following lemma is in the Appendix.

\begin{Lemma}\label{lemma:SnConvergence}
For all $\eta > 0$, $\P_0( |{S}_n(f)/l_n(f) - \lambda_n| > \eta) \to 0$, as $n \to \infty$.
\end{Lemma}

We return to the proof of Proposition \ref{prop:NecessaryConditions}.
Given the transmission functions at all other nodes, the fusion center will optimize performance by using an appropriate likelihood ratio test, with a (possibly randomized) threshold. We can therefore assume, without loss of generality that this is the case. We let  $\zeta_n$ be the threshold chosen, and note that it must satisfy
\begin{equation}\label{eq:aleq}
\P_0(S_n(f)/l_n(f) \leq \zeta_n)\geq 1-\alpha.
\end{equation}
From a change of measure argument (see Lemma 3.4.7 in \cite{DemZei:98}), we have for $\eta >0$,
\begin{align*}
& \ofrac{l_n(f)}\log \beta^*(T_n) \\
& \geq \lambda_n - \eta
+ \ofrac{l_n(f)}\log\P_{0}\Big( \lambda_n - \eta < \frac{{S}_n(f)}{l_n(f)} \leq \zeta_n \Big).
\end{align*}
Using (\ref{eq:aleq}) and Lemma \ref{lemma:SnConvergence},
we see that
the last term goes to 0 as $n\to\infty$.We also have
\begin{align*}
\lambda_n
& = \ofrac{l_n(f)} \Big( \sum_{v \in D_n} \E_0 \big[\log \ddfrac{\P_1^{\gamma_v}}{\P_0^{\gamma_v}}\big]
+ \sum_{v \in F_{N,n}} \E_0[L_{v,n}] \Big)\\
& \geq (1-q_{N,n}) g^*_P + q_{N,n} K,
\end{align*}
where, using the notation in Assumption \ref{assumpt:LossOfDivergence},
\begin{align*}
K = \inf_{\substack{1<k\leq N \\ \xi\in\Gamma(k)\times\Gamma^k}} \ofrac{k} \E_0\Big[\log \ddfrac{\nu_1^\xi}{\nu_0^\xi}\Big]
> g^*_P.
\end{align*}
Then, letting $n\to\infty$, we have
\begin{align*}
g^*_R
& \geq (1-q_N) g^*_P + q_N K -\eta,
\end{align*}
for all $\eta > 0$. Taking $\eta \to 0$ completes the proof.
\end{proof}

The condition that there exists a finite $N$ such that
$l_n(v) \leq N$ for a non-vanishing proportion of nodes,
in the statement of Proposition \ref{prop:NecessaryConditions},
can be thought of as
corresponding to a situation where relay nodes are of two different types: high cost relays that can process a large number of received messages ($l_n(v) \to \infty$) and low cost relays that can only process a limited number of received messages
($l_n(v) \leq N$ for some small $N$). From this perspective,
Proposition \ref{prop:NecessaryConditions} states that a tree network of height greater than one, with a nontrivial proportion of low cost relays, will always have a performance worse than that of a parallel configuration.

Together with Proposition \ref{prop:NeymanPearsonSuff}, we have shown the following.
\begin{Proposition}\label{prop:NecessaryAndSufficient}
Suppose that Assumptions \ref{assumpt:EquivalentMeasures}-\ref{assumpt:LossOfDivergence} hold.
Then, $g^*_R=g^*_P$
iff $z=1$ (or equivalently, iff $q_N=0$ for all $N>0$).
\end{Proposition}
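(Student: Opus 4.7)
The plan is to observe that this proposition is essentially a corollary that assembles Propositions \ref{prop:NeymanPearsonSuff} and \ref{prop:NecessaryConditions}, using Lemma \ref{lemma:Equivalence} to translate between the equivalent formulations ``$z=1$'' and ``$q_N=0$ for all $N>0$.'' So the proof is really a short combination argument rather than a new technical development.

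For the sufficiency direction, I would assume $z=1$ (equivalently, $q_N=0$ for all $N>0$, by Lemma \ref{lemma:Equivalence}). Under Assumptions \ref{assumpt:EquivalentMeasures}--\ref{assumpt:LLRQ}, Proposition \ref{prop:NeymanPearsonSuff} directly yields $g^*_R=g^*_P$ (in fact, also $g^*=g^*_P$). Note that Assumption \ref{assumpt:LossOfDivergence} is not needed in this direction.

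For the necessity direction, I would argue by contrapositive: assume $z<1$, i.e., $\liminf_{n\to\infty} l_n(f)/n < 1$. By Lemma \ref{lemma:Equivalence}, there exists some $N>0$ for which $q_N>0$. Then, under the full set of Assumptions \ref{assumpt:EquivalentMeasures}--\ref{assumpt:LossOfDivergence}, Proposition \ref{prop:NecessaryConditions} gives $g^*_P < g^*_R$, so in particular $g^*_R \neq g^*_P$. This requires $h\geq 2$, but if $h=1$ the tree is a parallel configuration and the condition $z<1$ cannot occur (every non-root node is a leaf), so the implication holds vacuously in that case.

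The only potential subtlety is making sure the two propositions are being invoked under the correct hypotheses: Proposition \ref{prop:NeymanPearsonSuff} requires only Assumptions \ref{assumpt:EquivalentMeasures}--\ref{assumpt:LLRQ}, while Proposition \ref{prop:NecessaryConditions} additionally requires Assumption \ref{assumpt:LossOfDivergence}. Since the current proposition assumes all of Assumptions \ref{assumpt:EquivalentMeasures}--\ref{assumpt:LossOfDivergence}, both invocations are legitimate, and no obstacle remains. The proof is therefore essentially a one-paragraph citation of the two previous propositions together with the equivalence from Lemma \ref{lemma:Equivalence}.
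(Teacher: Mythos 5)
Your proposal is correct and matches the paper's own treatment: the paper proves this proposition simply by combining Proposition \ref{prop:NeymanPearsonSuff} (sufficiency) with Proposition \ref{prop:NecessaryConditions} (necessity), using Lemma \ref{lemma:Equivalence} for the equivalence of $z=1$ with $q_N=0$ for all $N>0$. Your additional remark that the $h=1$ case is vacuous (since then every non-root node is a leaf and $z=1$ automatically) is a correct and slightly more careful handling of the $h\geq 2$ hypothesis in Proposition \ref{prop:NecessaryConditions} than the paper makes explicit.
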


We close with an example in which $z<1$ and $g^*<g^*_P$. Since there are also easy examples where $z<1$ and $g^*_P<g^*$, this suggests that one can combine them to construct examples where $z<1$ and $g^*=g^*_P$. Thus, unlike the case of a relay tree, $z=1$ is not a necessary condition for $g^*=g_P$.

\begin{Example}\label{ex:gstarCounterEx}
Consider the tree network shown in Figure \ref{fig:gstarCounterEx}, where
every node makes a 3-bit observation. Each leaf then compresses its 3-bit observation
to a 1-bit message, while each level 1 node is allowed to send a 4-bit message.
(Recall that our framework allows for different transmission function sets $\Gamma(d)$ at the different levels.)
We assume Assumptions \ref{assumpt:EquivalentMeasures}-\ref{assumpt:LLRQ} hold.
Moreover, we assume that this network satisfies Assumption \ref{assumpt:LossOfDivergence}.

\begin{figure}[!htb]
\begin{center}
\includegraphics[scale=1]{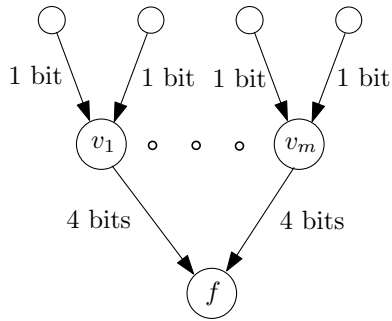}
\caption{Every node makes a 3-bit observation. Leaves are constrained to sending 1-bit messages,
while level 1 nodes are constrained to sending 4-bit messages.}\label{fig:gstarCounterEx}
\end{center}
\end{figure}

Consider the following strategy: each level 1 node forwards the two 1-bit messages it receives
from its two leaves to the fusion center. It then compress its own 3-bit observation into a 2-bit message before
sending it to the fusion center. Using this strategy, the tree network is equivalent to a parallel
configuration with $3m$ nodes, $2m$ of which are constrained to sending 1-bit messages, and $m$ of which
are constrained to sending 2-bit messages. Clearly, this parallel configuration performs strictly better
than one in which all $3m$ nodes are constrained to sending 1-bit messages, therefore we have $g^* < g_P^*$.
\hfill \EndExample
\end{Example}

Example \ref{ex:gstarCounterEx} shows that, unlike the case of relay trees, a tree can outperform
a parallel configuration.
On the other hand, Example \ref{ex:gstarCounterEx} is an artifact of our assumptions.
For example,
if we restrict every node in this example to sending only 1 bit,
the situation is reversed and we have $g^*_P < g^*$.
The question of
whether a parallel configuration always performs at least as well as
a tree network, i.e., whether $g_P^*\leq g^*$, when every node can send the same number of bits,
remains open.

\section{Conclusion}\label{sect:Conclusion}

We have studied the asymptotic detection performance of tree networks with bounded height, under a Neyman-Pearson criterion.
Similar to the parallel configuration, we have shown that the optimal Type II error probability decays exponentially fast with the number of
nodes. In addition, we have shown
that if leaves dominate (i.e., $l_n(f)/n \to 1$),
the network can achieve the same performance as if all nodes were transmitting directly to the fusion center.
We also
provided a simple strategy, in which
all leaves use the same transmission function, and all other nodes act as 1-bit relays, which
achieves the optimal error exponent to any desired accuracy.
The sufficient conditions are easy to achieve in cases of practical interest, hence a system
designer can obtain the optimal performance while ensuring that the network is energy efficient.
Once the sufficient conditions are satisfied, the architecture of the network no longer affects
its detection error exponent. On the other hand, we also showed that for the practically interesting case where $z=1$,
the sufficient conditions are also necessary.
Thus, in a network where the leaves do not dominate,
the error decay rate will be worse than that of a parallel configuration, and will actually
depend on the particular network architecture.

Needless to say, our conclusions only hold for the particular setting and criterion we have employed.
One issue that has not been touched upon is that, with a relay network, a significantly larger value of $n$ may be required before the asymptotic error exponent yields a good approximation.
Moreover, in practice, it
would be wasteful to have only the leaves make observations, if $n$ is not large enough.
Furthermore, under a Bayesian criterion, the
same performance as the parallel configuration can no longer be achieved,
although exponential decay is still possible \cite{TayTsiWin:C07b}.
Finally, the more realistic case where the i.i.d.\ assumption is violated, remains unexplored,
with work mainly limited to the parallel configuration \cite{DraLee:91,KamZhuGra:92,BluKas:92,BluKasPoo:97,ChaVee:06,LiDai:05}.

Future work includes characterizing the asymptotically optimal performance of tree networks without the bounded height constraint.
We would like to understand the rate at which the error probability decays, and its dependence on the rate at which the height of the tree increases.
Another intriguing question, which has been left unanswered, is whether the inequality $g^*_P\leq g^*$ is always true under the bounded height assumption, when every node is constrained to sending the same number of bits.

\section{Acknowledgements}\label{sect:Acknowledgements}

We wish to thank the anonymous reviewers for their
careful reading of the manuscript, and their detailed comments that have improved the
presentation. 

\appendix
\section{Appendix}

\subsection{Proof of Proposition \ref{prop:UpperBddErr}}

We first show part (\ref{it:general}).
The proof proceeds by induction on $k$. Suppose that $k=1$, which
is equivalent to the
well-studied case where all sensors transmit directly to a fusion center.
In this case, $p_n(v)=l_n(v)$.
Since $t_1 \in (-\KLD{\P_0^\gamma}{\P_1^\gamma}, \KLD{\P_1^\gamma}{\P_0^\gamma})$,
from (2.2.13) of \cite{DemZei:98}, we obtain
\begin{align*}
\ofrac{l_n(v)}\log \P_1\Big ( \frac{S_n(v)}{l_n(v)} \leq t_1 \Big )
& \leq -\Lambda_{1,1}^*(\gamma,t_1).
\end{align*}
The inequality for the Type I error probability follows from a similar argument.

Consider now the induction hypothesis
that the result holds for some $k$. Given a $k$-uniform tree rooted at $v$, the induction hypothesis leads to bounds on the probabilities associated with the
log-likelihood ratio $L_{v,n}$ of the message $Y_{v}$ computed at the node $v$. We use these bounds to obtain bounds on the log-moment generating function of $L_{v,n}$.
Recall that $L_{v,n}$ equals ${\cal L}_{v,n}(0)$ whenever $Y_{v}=0$, which is the case if and only if $S_n(v)/l_n(v)\leq t_k$.
Fix some $\lambda\in[-1,0]$. We have
\begin{align}
& \ofrac{l_n(v)} \log \E_1 \big[e^{\lambda L_{v,n}}\big] \nonumber\\
& = \ofrac{l_n(v)} \log \Big[ \P_1(Y_v=0) e^{\lambda \mathcal{L}_{v,n}(0)}
+ \P_1(Y_v=1) e^{\lambda \mathcal{L}_{v,n}(1)} \Big] \nonumber\\
& = \ofrac{l_n(v)} \log \Big[
\P_1(Y_v=0)^{1+\lambda}\P_0(Y_v=0)^{-\lambda}
+
\P_1(Y_v=1)^{1+\lambda}\P_0(Y_v=1)^{-\lambda}
\Big] \nonumber\\
& \leq \ofrac{l_n(v)} \log \Big[
\P_1(Y_v=0)^{1+\lambda}
+ \P_0(Y_v=1)^{-\lambda}
\Big].\nonumber
\end{align}
Using the inequality $\log(a+b)\leq \max\{\log(2a),\log(2b)\}$, we obtain
\begin{align}
& \ofrac{l_n(v)} \log \E_1\big[ e^{\lambda L_{v,n}}\big] \nonumber\\
& \leq \max \big\{\frac{1+\lambda}{l_n(v)} \log\P_1(Y_v=0), - \frac{\lambda}{l_n(v)} \log\P_0(Y_v=1) \big\}
+ \frac{\log 2}{l_n(v)} \nonumber\\
& \leq \max \big\{ -(1+\lambda) \Lambdahs{1}{k}, \lambda \Lambdahs{0}{k} \big\}
+ \frac{p_n(v)}{l_n(v)}-1 + \frac{\log 2}{l_n(v)} \label{InductionStep}\\
& \leq \Lambda_{1,k}(\gamma,t\tn{k};\lambda) + \frac{p_n(v)}{l_n(v)}+ \frac{1}{l_n(v)}-1,\label{LogCumh}
\end{align}
where (\ref{InductionStep}) follows from the induction hypothesis.

Consider now a node $u$ at level $k+1$. The subtree rooted at $u$ is a $(k+1)$-uniform tree. Each level $k$ node $v\in C_n(u)$ can be viewed as the root of a $k$-uniform tree and Eq.\ (\ref{LogCumh}) can be applied to $L_{v,n}$.
From the Markov Inequality, and since $\lambda \in [-1,0]$, we have
\begin{align*}
\P_1 \Big( \frac{S_n(u)}{l_n(u)} \leq t_{k+1} \Big) & \leq e^{-\lambda l_n(u) t_{k+1}} \E_1 \big[e^{\lambda S_n(u)}\big],
\end{align*}
so that
\begin{align}
& \ofrac{l_n(u)} \log\P_1 \Big( \frac{S_n(u)}{l_n(u)} \leq t_{k+1} \Big) \nonumber\\
& \leq - \lambda t_{k+1} + \ofrac{l_n(u)} \sum_{v \in C_n(u)} \log \E_1 \big[e^{\lambda L_{v,n}}\big] \nonumber\\
& = - \lambda t_{k+1}
+ \sum_{v \in C_n(u)} \frac{l_n(v)}{l_n(u)} \cdot \ofrac{l_n(v)} \log \E_1 \big[e^{\lambda L_{v,n}} \big] \nonumber\\
& \leq - \lambda t_{k+1} + \Lambda_{1,k}(\gamma,t\tn{k};\lambda)
+ \sum_{v \in C_n(u)}\frac{p_n(v)}{l_n(u)}+ \frac{|C_n(u)|}{l_n(u)}-1 \label{cumulantbdd1}\\
& = - \lambda t_{k+1} + \Lambda_{1,k}(\gamma,t\tn{k};\lambda) + \frac{p_n(u)}{l_n(u)} - 1,\label{cumulantbdd}
\end{align}
where (\ref{cumulantbdd1}) follows from the induction hypothesis and (\ref{LogCumh}).
Taking the infimum over $\lambda \in [-1,0]$ (cf.\ Lemma \ref{lemma:LambdaRelationship}), and using (\ref{eq:lstar}),
we obtain
\begin{align*}
\ofrac{l_n(u)} \log\P_1 \Big( \frac{S_n(u)}{l_n(u)} \leq t_{k+1} \Big)
\leq - \Lambdahs{1}{k+1} + \frac{p_n(u)}{l_n(u)} - 1.
\end{align*}
A similar argument proves the result for the Type I error probability,
and the proof of part (\ref{it:general}) is complete.

For part (\ref{it:N}), suppose that for all $n\geq n_0$ and all $v \in {B}_n$, we have $l_n(v) \geq N$. Note that
$l_n(f) \geq N|B_n|$. Furthermore, the number of nodes at each level $k \geq 1$ is bounded by $|B_n|$, which yields
$$
\frac{p_n(f)}{l_n(f)} -1\leq
\frac{n}{l_n(f)} - 1
= \frac{n - l_n(f)}{l_n(f)} \\
\leq \frac{h |B_n|}{N |B_n|} =\frac{h}{N}.
$$
Applying the results from part (\ref{it:general}), with $k=h$, we obtain  part
(\ref{it:N}).


\subsection{Proof of Lemma \ref{lemma:Uniformizedq_N}}

We have
$l_n'(f)=l_n(f)$. Furthermore, it can be shown that $|B_n'| \leq h |B_n|$. Therefore,
\begin{align}
q'_{N, n} = \ofrac{l_n'(f)}\sum_{v\in F'_{N,n}} l_n'(v)
& \leq \ofrac{l_n(f)} N |B_n'| \nonumber\\
& \leq \ofrac{l_n(f)} N h \big( |F_{M,n}| + |F_{M,n}^c|\big)\nonumber\\
& \leq h N q_{M,n} + h N/M,\nonumber
\end{align}
where the last inequality follows from
$|F_{M,n}| \leq \sum\limits_{v \in F_{M,n}} l_n(v)$ and $|F_{M,n}^c| \leq l_n(f)/M$.
Taking the limit superior as $n\to\infty$, we obtain
\begin{align*}
q'_N \leq h (N q_M + N/M).
\end{align*}
Suppose that $q_M=0$ for all $M>0$. Then for all $N,M>0$, we have
\begin{align*}
q'_N \leq h N/M.
\end{align*}
Taking $M\to \infty$, we obtain the desired result.


\subsection{Proof of Lemma \ref{lemma:Equivalence}}

Suppose that $q_N > 0$ for some $N > 0$.
Using the inequality
\begin{align*}
& q_{N,n} = \ofrac{l_n(f)} \sum_{v \in F_{N,n}} l_n(v) \leq \frac{N|F_{N,n}|}{l_n(f)},\end{align*}
or
\begin{equation}\label{eq:F}
|F_{N,n}| \geq \frac{q_{N,n}}{N} l_n(f),
\end{equation}
we obtain
\begin{align*}
\frac{l_n(f)}{n}
& \leq \frac{l_n(f)}{|F_{N,n}| + l_n(f)}\\
& \leq  \frac{l_n(f)}{q_{N,n} l_n(f)/N + l_n(f)} \\
& = \frac{N}{N + q_{N,n}}.
\end{align*}
Letting $n \to \infty$, we obtain
\begin{align*}
z \leq \frac{N}{N + q_{N}} < 1.
\end{align*}

For the converse, suppose that $q_N=0$ for all $N>0$.
It can be seen that each non-leaf node is on a path that connects some $v\in B_n$ to the fusion center.
Therefore, the number of non-leaf nodes $n-l_n(f)$ is bounded by $h|B_n|$.
We have
$$
\frac{n - l_n(f)}{l_n(f)}
\leq \frac{h|B_n|}{l_n(f)}
= h\frac{|F_{N,n}| + |F_{N,n}^c|}{l_n(f)}
 \leq h q_{N,n} + \frac{h}{N}.
$$
Therefore,
\begin{align*}
\limsup_{n\to\infty} \frac{n - l_n(f)}{l_n(f)}
& \leq \frac{h}{N}.
\end{align*}
This is true for all $N >0$, which implies that $\lim\limits_{n\to\infty} l_n(f)/n=1$.


\subsection{Proof of Lemma \ref{lemma:SnConvergence}}

For each $v\in B_n$, we have $Y_{v} = \gamma_v( \{\gamma_u(X_u): u \in C_n(v)\})$, for some $\gamma_v \in \Gamma(l_n(v))$.
Using the first, and the second part of Lemma \ref{lemma:BoundedDivergence}, there exists some $a_1 \in (0,\infty)$, such that
\begin{align}
\E_0[L_{v,n}^2]
& \leq \E_0 \Big[ \Big(\sum_{u\in C_n(v)} \log \ddfrac{\P_1^{\gamma_u}}{\P_0^{\gamma_u}} \Big)^2 \Big] + 1 \nonumber\\
& \leq l_n(v)\E_0 \Big[ \sum_{u\in C_n(v)} \log^2 \ddfrac{\P_1^{\gamma_u}}{\P_0^{\gamma_u}}  \Big] + 1   \nonumber\\
& \leq l^2_n(v) a_1 + 1 \nonumber\\
& \leq l_n^2(v)a, \label{log2Bdd}
\end{align}
where $a = a_1+1$.

To prove the lemma, we use Chebychev's inequality, and the inequalities $l_n(v)\leq N$ for $v\in F_{N,n}$, and $|D_n|\leq l_n(f)$, to obtain
\begin{align}
& \P_0\Big( \big| \frac{{S}_n(f)}{l_n(f)} - \lambda_n \big| > \eta \Big) \nonumber\\
& \leq \ofrac{\eta^2 l_n^2(f)} \Big( \sum_{v\in D_n}
\E_0\big[\log^2 \ddfrac{\P_1^{\gamma_v}}{\P_0^{\gamma_v}}\big]
+ \sum_{v\in F_{N,n}} \E_0[L_{v,n}^2] \Big) \nonumber\\
& \leq \ofrac{\eta^2 l_n^2(f)} \Big( \sum_{v\in D_n} a
+ \sum_{v\in F_{N,n}} l_n^2(v) a \Big) \label{BoundedLog2}\\
& \leq \frac{a}{\eta^2 l_n(f)} + \frac{a}{\eta^2 l_n(f)} \sum_{v\in F_{N,n}} \frac{l_n(v)}{l_n(f)} N \nonumber\\
& \leq \frac{a(1+N)}{\eta^2 l_n(f)}, \label{ChebyBdd}
\end{align}
where (\ref{BoundedLog2}) follows from Lemma \ref{lemma:BoundedDivergence} and (\ref{log2Bdd}). The R.H.S.\ of (\ref{ChebyBdd}) goes to zero as $n\to \infty$, and the proof is complete.



\begin{thebibliography}{10}
\providecommand{\url}[1]{#1}
\csname url@samestyle\endcsname
\providecommand{\newblock}{\relax}
\providecommand{\bibinfo}[2]{#2}
\providecommand{\BIBentrySTDinterwordspacing}{\spaceskip=0pt\relax}
\providecommand{\BIBentryALTinterwordstretchfactor}{4}
\providecommand{\BIBentryALTinterwordspacing}{\spaceskip=\fontdimen2\font plus
\BIBentryALTinterwordstretchfactor\fontdimen3\font minus
  \fontdimen4\font\relax}
\providecommand{\BIBforeignlanguage}[2]{{%
\expandafter\ifx\csname l@#1\endcsname\relax
\typeout{** WARNING: IEEEtran.bst: No hyphenation pattern has been}%
\typeout{** loaded for the language `#1'. Using the pattern for}%
\typeout{** the default language instead.}%
\else
\language=\csname l@#1\endcsname
\fi
#2}}
\providecommand{\BIBdecl}{\relax}
\BIBdecl

\bibitem{TenSan:81}
R.~R. Tenney and N.~R. Sandell, ``Detection with distributed sensors,''
  \emph{{IEEE} Trans. Aerosp. Electron. Syst.}, vol.~17, pp. 501--510, 1981.

\bibitem{ChaVar:86}
Z.~Chair and P.~K. Varshney, ``Optimal data fusion in multiple sensor detection
  systems,'' \emph{{IEEE} Trans. Aerosp. Electron. Syst.}, vol.~22, pp.
  98--101, 1986.

\bibitem{PolTsi:90}
G.~Polychronopoulos and J.~N. Tsitsiklis, ``Explicit solutions for some simple
  decentralized detection problems,'' \emph{{IEEE} Trans. Aerosp. Electron.
  Syst.}, vol.~26, pp. 282--292, 1990.

\bibitem{WilWar:92}
P.~Willett and D.~Warren, ``The suboptimality of randomized tests in
  distributed and quantized detection systems,'' \emph{{IEEE} Trans. Inf.
  Theory}, vol.~38, pp. 355--361, Mar. 1992.

\bibitem{Tsi:93a}
J.~N. Tsitsiklis, ``Extremal properties of likelihood-ratio quantizers,''
  \emph{{IEEE} Trans. Commun.}, vol.~41, pp. 550--558, 1993.

\bibitem{Tsi:93}
------, ``Decentralized detection,'' \emph{Advances in Statistical Signal
  Processing}, vol.~2, pp. 297--344, 1993.

\bibitem{IrvTsi:94}
W.~W. Irving and J.~N. Tsitsiklis, ``Some properties of optimal thresholds in
  decentralized detection,'' \emph{{IEEE} Trans. Autom. Control}, vol.~39, pp.
  835--838, 1994.

\bibitem{VisVar:97}
R.~Viswanathan and P.~K. Varshney, ``Distributed detection with multiple
  sensors: part {I} - fundamentals,'' \emph{Proc. {IEEE}}, vol.~85, pp. 54--63,
  1997.

\bibitem{CheVar:02}
B.~Chen and P.~K. Varshney, ``A {Bayesian} sampling approach to decision fusion
  using hierarchical models,'' \emph{{IEEE} Trans. Signal Process.}, vol.~50,
  no.~8, pp. 1809--1818, Aug. 2002.

\bibitem{CheWil:05}
B.~Chen and P.~K. Willett, ``On the optimality of the likelihood-ratio test for
  local sensor decision rules in the presence of nonideal channels,''
  \emph{{IEEE} Trans. Inf. Theory}, vol.~51, no.~2, pp. 693--699, Feb. 2005.

\bibitem{Kas:06}
A.~Kashyap, ``Comments on ``{On} the optimality of the likelihood-ratio test
  for local sensor decision rules in the presence of nonideal channels'',''
  \emph{{IEEE} Trans. Inf. Theory}, vol.~52, no.~3, pp. 1274--1275, Mar. 2006.

\bibitem{LiuChe:06}
B.~Liu and B.~Chen, ``Channel-optimized quantizers for decentralized detection
  in sensor networks,'' \emph{{IEEE} Trans. Inf. Theory}, vol.~52, no.~7, pp.
  3349--3358, Jul. 2006.

\bibitem{EkcTen:82}
L.~K. Ekchian and R.~R. Tenney, ``Detection networks,'' in \emph{Proc. {IEEE}
  {C}onference on {D}ecision and {C}ontrol}, 1982, pp. 686--691.

\bibitem{VisThoTum:88}
R.~Viswanathan, S.~C.~A. Thomopoulos, and R.~Tumuluri, ``Optimal serial
  distributed decision fusion,'' \emph{{IEEE} Trans. Aerosp. Electron. Syst.},
  vol.~24, no.~4, pp. 366--376, 1988.

\bibitem{ReiNol:90a}
A.~R. Reibman and L.~W. Nolte, ``Design and performance comparison of
  distributed detection networks,'' \emph{{IEEE} Trans. Aerosp. Electron.
  Syst.}, vol.~23, pp. 789--797, 1987.

\bibitem{TanPatKle:91}
Z.~B. Tang, K.~R. Pattipati, and D.~L. Kleinman, ``Optimization of detection
  networks: part {I} - tandem structures,'' \emph{{IEEE} Trans. Syst., Man,
  Cybern.}, vol.~21, no.~5, pp. 1044--1059, 1991.

\bibitem{TanPatKle:93}
------, ``Optimization of detection networks: part {II} - tree structures,''
  \emph{{IEEE} Trans. Syst., Man, Cybern.}, vol.~23, no.~1, pp. 211--221, 1993.

\bibitem{PapAth:92a}
J.~D. Papastavrou and M.~Athans, ``On optimal distributed decision
  architectures in a hypothesis testing environment,'' \emph{{IEEE} Trans.
  Autom. Control}, vol.~37, no.~8, pp. 1154--1169, 1992.

\bibitem{PetPatKle:94}
A.~Pete, K.~Pattipati, and D.~Kleinman, ``Optimization of detection networks
  with multiple event structures,'' \emph{{IEEE} Trans. Autom. Control},
  vol.~39, no.~8, pp. 1702--1707, 1994.

\bibitem{AlhVar:95}
S.~Alhakeem and P.~K. Varshney, ``A unified approach to the design of
  decentralized detection systems,'' \emph{{IEEE} Trans. Aerosp. Electron.
  Syst.}, vol.~31, no.~1, pp. 9--20, 1995.

\bibitem{LinCheVar:05}
Y.~Lin, B.~Chen, and P.~K. Varshney, ``Decision fusion rules in multi-hop
  wireless sensor networks,'' \emph{{IEEE} Trans. Aerosp. Electron. Syst.},
  vol.~41, no.~2, pp. 475--488, Apr. 2005.

\bibitem{Tsi:88}
J.~N. Tsitsiklis, ``Decentralized detection by a large number of sensors,''
  \emph{Math. Control, Signals, Syst.}, vol.~1, pp. 167--182, 1988.

\bibitem{HelCov:70}
M.~E. Hellman and T.~M. Cover, ``Learning with finite memory,'' \emph{Ann. of
  Math. Statist.}, vol.~41, no.~3, pp. 765--782, 1970.

\bibitem{Cov:69}
T.~M. Cover, ``Hypothesis testing with finite statistics,'' \emph{Ann. of Math.
  Statist.}, vol.~40, no.~3, pp. 828--835, 1969.

\bibitem{PapAth:92}
J.~D. Papastavrou and M.~Athans, ``Distributed detection by a large team of
  sensors in tandem,'' \emph{{IEEE} Trans. Aerosp. Electron. Syst.}, vol.~28,
  no.~3, pp. 639--653, 1992.

\bibitem{TayTsiWin:C07c}
W.~P. Tay, J.~N. Tsitsiklis, and M.~Z. Win, ``On the sub-exponential decay of
  detection probabilities in long tandems,'' in \emph{Proc. IEEE Int. Conf.
  Acoustics, Speech, and Signal Processing}, Honolulu, HI, Apr. 2007, pp. 837
  -- 840.

\bibitem{DemZei:98}
A.~Dembo and O.~Zeitouni, \emph{Large Deviations Techniques and
  Applications}.\hskip 1em plus 0.5em minus 0.4em\relax New York, NY:
  Springer-Verlag, 1998.

\bibitem{TayTsiWin:C07b}
W.~P. Tay, J.~N. Tsitsiklis, and M.~Z. Win, ``Bayesian detection in bounded
  height tree networks,'' in \emph{Proc. {D}ata {C}ompression {C}onf.},
  Snowbird, UT, Mar. 2007, pp. 243 -- 252.

\bibitem{DraLee:91}
E.~Drakopoulos and C.~C. Lee, ``Optimum multisensor fusion of correlated local
  decisions,'' \emph{{IEEE} Trans. Aerosp. Electron. Syst.}, vol.~27, no.~4,
  pp. 593--606, Jul. 1991.

\bibitem{KamZhuGra:92}
M.~Kam, Q.~Zhu, and W.~S. Gray, ``Optimal data fusion of correlated local
  decisions in multiple sensor detection systems,'' \emph{{IEEE} Trans. Aerosp.
  Electron. Syst.}, vol.~28, no.~3, pp. 916--920, 1992.

\bibitem{BluKas:92}
R.~S. Blum and S.~A. Kassam, ``Optimum distributed detection of weak signals in
  dependent sensors,'' \emph{{IEEE} Trans. Inf. Theory}, vol.~38, no.~3, pp.
  1066--1079, May 1992.

\bibitem{BluKasPoo:97}
R.~S. Blum, S.~A. Kassam, and H.~Poor, ``Distributed detection with multiple
  sensors: {part II} - advanced topics,'' \emph{Proc. {IEEE}}, vol.~85, no.~1,
  pp. 64--79, 1997.

\bibitem{ChaVee:06}
J.-F. Chamberland and V.~V. Veeravalli, ``How dense should a sensor network be
  for detection with correlated observations?'' \emph{{IEEE} Trans. Inf.
  Theory}, vol.~52, no.~11, pp. 5099--5106, Nov. 2006.

\bibitem{LiDai:05}
W.~Li and H.~Dai, ``Distributed detection in large-scale sensor networks with
  correlated sensor observations,'' in \emph{Proc. Allerton Conf. on
  Communication, Control, and Computing}, Sep. 2005.

\end{thebibliography}


\end{document}